\documentclass[11pt]{article}

\usepackage{amsmath, amsthm, amssymb, graphicx, enumerate, fullpage, color}
\usepackage{pseudocode, framed}
\usepackage{algorithmicx, algpseudocode}

\theoremstyle{plain}
\newtheorem{theorem}{Theorem}[section]
\newtheorem{prop}[theorem]{Proposition}
\newtheorem{lemma}[theorem]{Lemma}
\newtheorem{cor}[theorem]{Corollary}

\theoremstyle{definition}
\newtheorem{defn}[theorem]{Definition}

\newcommand{\cP}{\mathcal{P}}
\newcommand{\cR}{\mathcal{R}}
\newcommand{\cT}{\mathcal{T}}
\newcommand{\minus}{\setminus}

\DeclareMathOperator{\diam}{diam}
\DeclareMathOperator{\dist}{dist}

\title{Local reconstructors and tolerant testers for connectivity and diameter}
\author{%
		Andrea Campagna%
	\thanks{This research was done while at IT University of Copenhagen and
	while visiting the Blavatnik School of Computer Science of Tel Aviv University.
	{\tt acam@itu.dk}.}
	\and
	Alan Guo%
	\thanks{CSAIL, Massachusetts Institute of Technology, Cambridge MA 02139.
	{\tt aguo@mit.edu}.
	Research supported in part by NSF grants CCF-0829672, CCF-1065125,
	CCF-6922462, and an NSF Graduate Research Fellowship.
	}
	\and
	Ronitt Rubinfeld%
	\thanks{CSAIL, Massachusetts Institute of Technology, Cambridge MA 02139
	and the Blavatnik School of Computer Science, Tel Aviv University.
	{\tt ronitt@csail.mit.edu}.
	Research supported by NSF grant 1065125 
and the Israel Science Foundation grant no. 1147/09.}
}
\date{August 12, 2012}

\newif\ifabstract
\abstracttrue
\abstractfalse
\newif\iffull
\ifabstract \fullfalse \else \fulltrue \fi

\newcounter{section-preserve}
\newcounter{theorem-preserve}
\newcommand{\blank}[1]{}
\newtoks\magicAppendix
\magicAppendix={}
\newtoks\magictoks
\newif\iflater
\laterfalse
\long\def\later#1{\magictoks={#1}%
  \edef\magictodo{\noexpand\magicAppendix={\the\magicAppendix \par
    \the\magictoks%
  }}
  \magictodo}
\long\def\both#1{\magictoks={#1}%
  \edef\magictodo{\noexpand\magicAppendix={\the\magicAppendix \par
    \noexpand\setcounter{theorem-preserve}{\noexpand\arabic{theorem}}%
    \noexpand\setcounter{theorem}{\arabic{theorem}}%
    \noexpand\setcounter{section-preserve}{\noexpand\arabic{section}}%
    \noexpand\setcounter{section}{\arabic{section}}%
        \noexpand\let\noexpand\oldsection=\noexpand\thesection
        \noexpand\def\noexpand\thesection{\thesection}
        \noexpand\let\noexpand\oldlabel=\noexpand\label
        \noexpand\let\noexpand\label=\noexpand\blank
    \the\magictoks%
    \noexpand\setcounter{theorem}{\noexpand\arabic{theorem-preserve}}%
    \noexpand\setcounter{section}{\noexpand\arabic{section-preserve}}%
        \noexpand\let\noexpand\thesection=\noexpand\oldsection
        \noexpand\let\noexpand\label=\noexpand\oldlabel
  }}
  \magictodo
  \the\magictoks}
\def\magicappendix{\latertrue \the\magicAppendix}

\iffull
  \long\def\both#1{#1}
  \let\later=\both
  \def\magicappendix{}
\fi

\begin{document}

\maketitle
\thispagestyle{empty}

\begin{abstract}
A local property reconstructor for a graph property is an algorithm
which, given oracle access to the adjacency list of a graph that is ``close''
to having the property, provides oracle access to the adjacency matrix
of a ``correction'' of the graph, i.e. a graph which has the property
and is close to the given graph. For this model, we achieve local property
reconstructors for the properties of connectivity and $k$-connectivity in undirected graphs,
and the property of strong connectivity in directed graphs. Along the way, we
present a method of transforming a local reconstructor
(which acts as a ``adjacency matrix oracle'' for the corrected graph)
into an ``adjacency list oracle''. This allows us to recursively use our local
reconstructor for $(k-1)$-connectivity to obtain a local reconstructor
for $k$-connectivity.

We also extend this notion of local property
reconstruction to parametrized graph properties (for instance, having
diameter at most $D$ for some parameter $D$) and require that
the corrected graph has the property with parameter close to the original.
We obtain a local reconstructor for the low diameter property,
where if the original graph is close to having diameter $D$,
then the corrected graph has diameter roughly $2D$.

We also exploit a connection between local property reconstruction
and property testing, observed by Brakerski,
to obtain new tolerant property testers for all of the aforementioned properties.
Except for the one for connectivity, these are the first tolerant property testers for
these properties.

\end{abstract}

\clearpage
\pagenumbering{arabic}

\section{Introduction}

Suppose we are given a very large graph $G$ that is promised to be close to having a
property $\cP_s$. For example, $\cP_D$ might denote the property of having diameter at most
$D$.
Local reconstruction algorithms provide very fast query access to a
``corrected'' version of $G$.
That is, the local reconstruction algorithm should have in mind some
$\widetilde{G}$ which has the property $\cP_s$
and is also close to the original graph $G$.
The goal of the local reconstruction algorithm is to provide very fast query
access to the edges of $\widetilde{G}$ ---
that is, given a pair of vertices $u, v$ in $G$, the algorithm should
in sublinear time determine whether the edge $(u,v)$ is in $\widetilde{G}$.
We call such an algorithm a \emph{local reconstructor} for $\cP_s$.
It can be useful to relax the condition that $\widetilde{G}$ has property $\cP_s$
and only require that $\widetilde{G}$ has property $\cP_{\phi(s)}$
which contains $\cP_s$ but is possibly larger. For instance, if $\cP_D$ is the property of
having diameter at most $D$, we might only require $\widetilde{G}$ to have
property $\cP_{4D}$, i.e.\ having diameter at most $4D$.

In this paper we study local reconstruction algorithms for some of the most
basic problems in graph theory, namely connectivity in undirected graphs,
strong connectivity in directed graphs,
$k$-connectivity in undirected graphs,
and small diameter in undirected graphs.
Such algorithms might be used to efficiently repair 
connectivity or small diameter in graphs. These are common issues that have been
considered in various models  in wireless networks
and robotics (see for example \cite{DCR11, SJK08}).

Techniques for designing local reconstructors are often borrowed from property testing,
and as noted by Brakerski~\cite{B08} (as well as in this paper), reconstructors can be used
to design property testers.
Property testers have been studied extensively in the literature (see, for instance,
~\cite{RS96, GGR98, GR02} for some early works on the subject).
A property tester for a property $\cP$ takes as input
a graph $G$ and parameter $\epsilon$, and accepts with high probability if $G$
has $\cP$
and rejects with high probability if $G$ is $\epsilon$-far from having $\cP$. Similarly, a
tolerant tester
gets a graph $G$ and parameters $\epsilon_1 \le \epsilon_2$, and accepts
with high probability
if $G$ is $\epsilon_1$-close to having $\cP$ and rejects with high probability if $G$ is
$\epsilon_2$-far from having $\cP$.

\paragraph{Our results.}

Our specific results are the following.
\begin{itemize}

\item%
We give local reconstructors for the following properties:
connectivity in undirected graphs;
strong connectivity in directed graphs;
$k$-connectivity in undirected graphs;
having diameter at most $D$, for some diameter parameter $D$, in
undirected graphs.

\item%
We present a method of transforming our local reconstructor
for $k$-connectivity (which provides query access to the adjacency
\emph{matrix} of the corrected graph) into an algorithm which
provides query access to the adjacency \emph{list} of the corrected
graph (see Section~\ref{s:neighbor} for the case $k=1$). We note that this is
\emph{not} a black-box transformation, and is specific to our
the local reconstructors that we design.

\item%
We exploit a connection between local property reconstruction and
property testing, observed by Brakerski~\cite{B08}, which we generalize
to the setting of parametrized graph properties, in order to
obtain tolerant property testers for all of the above
graph properties (property testing notions will be defined shortly).

\end{itemize}

\paragraph{Our approach.}
Our techniques are simple, yet seem to be quite powerful given their simplicity.
For each of the above properties, the strategy for constructing a local
reconstructor is the same.
First, we designate a ``super-node''; then elect ``leader nodes''
from which we add edges to the super-node.
The main technique we use to elect leaders is to initially independently assign
a random rank to every node, and to declare a node a leader if it has the lowest rank
among all nodes within a small neighborhood. This is more useful than simply choosing
leaders at random, since sometimes we would like to guarantee we have a leader; for instance,
for connectivity, we need at least one leader in each connected component if we want
to \emph{guarantee} that the corrected graph is connected.

Brakerski~\cite{B08} gave a way to construct a tolerant tester from
a local reconstructor and property tester.
\iffull{
Brakerski's idea (see~\cite{B08}) behind the construction of the tolerant tester
is to attempt to use the local reconstructor on $G$ to get $\widetilde{G}$
which has the property and is close to $G$. If $G$ is indeed close to having
the property, then the attempt will be successful. If $G$ is far from having the property, then
either $\widetilde{G}$ will be far from having the property (which can be detected
by running the property tester on $\widetilde{G}$) or $G$ will be far from $\widetilde{G}$
(which can be detected via sampling).
}\fi
Since previous works (\cite{GR02}, \cite{BR02}, \cite{MR09},
 \cite{PR02})
give property testers for these properties we study, we obtain, as corollaries, that
these properties have tolerant testers.

\paragraph{Related work.}
\iffull{
The notion of locally reconstructing a data set was introduced under
the name \emph{local filter} in~\cite{ACCL04}, where the property
considered was monotonicity of sequences.
A model of local filters in which the requirements are strengthened is presented
in~\cite{SS08}, where the property of monotonicity is again considered.
}\fi %
\ifabstract{
The notion of locally reconstructing a data set was introduced
under the name \emph{local filter} in~\cite{ACCL04} and further refined
in~\cite{SS08}. In both works, the property of monotonicity of sequences
was considered.
}\fi%
A closely related work is~\cite{B08}, which introduces the concept
of local reconstruction under the name \emph{local restoring} and
also shows a special case of our relationship between local reconstructors,
property testers, and tolerant testers. 
\iffull{
Several other properties of graphs, functions, and geometric point sets
have been studied in the context of local reconstruction.
A local reconstruction algorithm for expander graphs is
given in~\cite{KPS08}. \cite{B08} gives local reconstructors
for the properties of bipartiteness and $\rho$-clique in the dense graph model,
as well as monotonicity. A more recent work (\cite{JR11}) concerns
locally reconstructing Lipschitz functions.
On the geometric side, \cite{CS06} studies local reconstruction
of convexity in two and three dimensions.
The problem of testing the properties for which we give local reconstructors
has been studied in multiple works---\cite{GR02} for connectivity,
\cite{BR02} for strong connectivity, \cite{GR02} and \cite{MR09} for $k$-connectivity,
and \cite{PR02} for diameter.
}\fi%
\ifabstract{
Several other properties of graphs
(expansion~\cite{KPS08},
bipartiteness~\cite{B08},
$\rho$-clique~\cite{B08}),
functions
(Lipschitz~\cite{JR11},
monotonicity~\cite{ACCL04,SS08,B08}),
geometric objects
(convexity~\cite{CS06})
have been studied in the context of local reconstruction.
The problem of testing the properties for which we give local reconstructors
has been studied in multiple works---\cite{GR02} for connectivity,
\cite{BR02} for strong connectivity, \cite{GR02} and \cite{MR09} for $k$-connectivity,
and \cite{PR02} for diameter.
}\fi

\paragraph{Organization.}
In Section~\ref{s:prelim}, we formally define our model and the notions
of local reconstructors, property testers, and tolerant testers, and
in Section~\ref{s:results} we
formally state our main results. In Section~\ref{s:conn} and~\ref{s:sconn},
we present our results for connectivity and strong connectivity.
Section~\ref{s:neighbor} serves as a brief interlude where we show
how our local reconstructor for connectivity can be modified to give
a neighbor oracle for the corrected graph $G'$. This procedure
will then be used in Sections~\ref{s:kconn} and~\ref{s:diam}, in which
we present our results for $k$-connectivity and small diameter respectively. 

\section{Preliminaries}\label{s:prelim}

We adopt the general sparse model of graphs as presented in~\cite{PR02},
i.e.\ the graph is given as an adjacency list, and
a query for a vertex $v$ is either its degree, or an index $i$ on which
the $i$-th neighbor of $v$ is returned (with respect to the representation of the
neighbors as an ordered list).
We assume there is some upper bound $m$ on the number
of edges of the graphs we work with, and distances are measured
according to this, i.e.\ if $k$ is the minimum number of edge deletions and
insertions necessary to change one graph to the other, then their distance
is $k/m$. We assume $m = \Omega(n)$ where $n$ is the number of vertices
in the graph.
\ifabstract{
We defer the formal definition of distance to Appendix~\ref{app:prelim}.
}\fi

\later{
\ifabstract{
\section{Formal definitions}\label{app:prelim}
}\fi

\begin{defn}[\cite{PR02}]
The \emph{distance between two graphs $G_1,G_2$}, denoted
$\dist(G_1,G_2)$, is equal to the number of unordered pairs
$(u,v)$ such that $(u,v)$ is an edge in one graph but not in the other,
divided by $m$.
A \emph{property} is a subset of graphs.
Throughout this paper we say that a graph has property $\cP$ if it
is contained in the subset $\cP$.
The \emph{distance between a graph $G$ and a property $\cP$},
denoted $\dist(G,\cP)$, is equal to $\dist(G,\cP) = \min_{G' \in \cP} \dist(G,G')$.
If $\dist(G,\cP) \le \epsilon$, then $G$ is \emph{$\epsilon$-close to $\cP$},
otherwise $G$ is \emph{$\epsilon$-far from $\cP$}.
\end{defn}
}

\paragraph{Parametrized properties.}
Our result relating local reconstructors to tolerant testers (Theorem~\ref{t:tolerant_param})
generalizes a result of Brakerski (\cite{B08}) to parametrized properties.
A \emph{parametrized property} $\cP_s$ is a property belonging to a family
$\{\cP_s\}_s$ of properties parametrized by some parameter $s$. For
example, the property $\cP_D$ of having diameter at most $D$ is a parametrized
property, with the diameter $D$ as the parameter.

\subsection{Local reconstructors}

\begin{defn}
For an undirected graph $G = (V,E)$, the \emph{neighbor set of $v \in V$}
is the set
$
N_G(v) = \{u \in V \mid (u,v) \in E\}.
$
For a directed graph $G = (V,E)$, the \emph{in-neighbor set and
out-neighbor set of $v \in V$} are respectively
$
N^{\rm in}_G(v) = \{u \in V \mid (u,v) \in E\}
$
and
$
N^{\rm out}_G(v) = \{u \in V \mid (v,u) \in E\}
$
\end{defn}

\begin{defn}[Neighbor and edge oracles]
A \emph{neighbor oracle} for a graph $G$ is an algorithm which, given query
$v \in V$ and either query $\deg$ or $i$, returns $\deg(v)$ or
the $i$-th neighbor of $v$ (with respect to some fixed ordering of the neighbor set)
in $O(1)$ time.
An \emph{edge oracle for $G$} is an algorithm $E_G$ which
returns in $O(1)$ time $E_G(u,v) = 1$ if $(u,v) \in E$
and $E_G(u,v) = 0$ otherwise.
\end{defn}

One can use a neighbor oracle to implement an edge oracle
with query complexity $\deg(v)$, since given
a query pair $(u,v)$, one can check if $u \in N_G(v)$ with $\deg(v)$ queries
to the neighbor oracle. We now formally define
local reconstructors. Roughly speaking, a local reconstructor uses a
neighbor oracle for $G$, which is close to $\cP$, to implement an edge oracle for
$\widetilde{G} \in \cP$ which is close to $G$.

\begin{defn}
Let $\epsilon_1,\epsilon_2, \delta > 0$ and let $\phi : \mathbb{N} \to \mathbb{N}$.
An \emph{$(\epsilon_1,\epsilon_2,\delta,\phi(\cdot))$-local reconstructor (LR)}
$\cR$ for a parametrized graph property $\cP_s$ is a randomized algorithm
with access to a neighbor oracle of a graph $G$ that is $\epsilon_1$-close to
$\cP_s$, which satisfies the following:
\begin{itemize}
\item%
$\cR$ makes $o(m)$ queries to the neighbor oracle for $G$
per query to $\cR$

\item%
There exists $\widetilde{G} \in \cP_{\phi(s)}$ with $\dist(G,\widetilde{G}) \le \epsilon_2$
such that $\cR$ is an edge oracle for $\widetilde{G}$,
with probability at least $1-\delta$
(over the coin tosses of $\cR$)

\end{itemize}
An \emph{$(\epsilon_1,\epsilon_2,\delta)$-local reconstructor} for a non-parametrized
graph property $\cP$ is simply a $(\epsilon_1,\epsilon_2,\delta,\phi(\cdot))$-local
reconstructor where $\cP$ is viewed as the only property in its parametrized family
and $\phi$ is the identity function.
The \emph{query complexity} of the local reconstructor is the number of queries
$\cR$ makes to the neighbor oracle for $G$ on any query $(u,v)$.
We note that this definition differs from that of~\cite{JR11} because even if
$G \in \cP$, the reconstructed graph $\widetilde{G}$ may not equal $G$ in general.
\end{defn}

\subsection{Tolerant testers}

Tolerant testers (see~\cite{PRR04}) are a generalization of property testers where the tester
may accept if the input is close enough to having the property, where
for property testers ``close enough'' means ``distance zero''.

\begin{defn}
Let $\epsilon_1,\epsilon_2 > 0$ and let $\phi : \mathbb{N} \to \mathbb{N}$.
An \emph{$(\epsilon_1,\epsilon_2,\phi(\cdot))$-tolerant tester}
$\cT$ for a parametrized graph property $\cP_s$ is a randomized algorithm
with query access to a neighbor oracle of an input graph $G$ that satisfies the following:
\begin{itemize}
\item%
$\cT$ makes $o(m)$ queries to the neighbor oracle for $G$

\item%
If $G$ is $\epsilon_1$-close to $\cP_s$, then
$\Pr[\cT {\rm ~accepts~}] \ge \frac23$

\item%
If $G$ is $\epsilon_2$-far from $\cP_{\phi(s)}$, then
$\Pr[\cT {\rm ~accepts~}] \le \frac13$

\end{itemize}
For a non-parametrized graph property $\cP$, an
\emph{$(\epsilon_1,\epsilon_2)$-tolerant}
tester is defined similarly by viewing $\cP$ as the single member of its parametrized
family and taking $\phi$ to be the identity function.

For a parametrized graph property, an \emph{$(\epsilon,\phi(\cdot))$-property
tester} is simply a $(0,\epsilon,\phi(\cdot))$-tolerant tester, and for a
non-parametrized graph property, an \emph{$\epsilon$-property tester}
is defined analogously.
\end{defn}

\section{Local reconstructors and tolerant testers}\label{s:results}

We now show that our notion of local reconstructors can be used alongside property
testers to construct tolerant testers for properties of sparse graphs. This idea
is not new and can be found as \cite[Theorem~3.1]{B08}, but we extend
the result for parametrized properties.
\ifabstract{
The proof is straightforward and deferred to Appendix~\ref{app:proof}.
}\fi

\begin{theorem}\label{t:tolerant_param}
Let $\cP_s$ be a parametrized graph property with an
$(\epsilon_1,\epsilon_2,\delta,\phi(\cdot))$-local reconstructor $\cR$
with query complexity $q_{\cR}$ and suppose
$\cP_{\phi(s)}$ has a $(\epsilon',\psi(\cdot))$-property tester $\cT$ with
query complexity $q_{\cT}$. Then
for all $\beta > 0$, $\cP_s$ has an $(\epsilon_1,\epsilon_2 + \epsilon' + \beta,
(\psi \circ \phi)(\cdot))$-tolerant tester with query complexity
$O\left( (1/\beta^2 + q_{\cT})q_{\cR}\right)$.
\end{theorem}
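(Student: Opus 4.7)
The plan is to follow Brakerski's blueprint, extended to parametrized properties. On input $G$, the tolerant tester $\cT'$ first instantiates $\cR$ by fixing its random coins once, so that every subsequent call to $\cR$ is consistent with a single edge oracle for some graph $\widetilde{G}$. It then performs two independent checks and accepts iff both succeed. \textbf{(i) Distance check:} draw $O(1/\beta^2)$ samples from an appropriate sampling distribution and, for each, query both the neighbor oracle of $G$ and the edge oracle implemented by $\cR$ to determine whether $G$ and $\widetilde{G}$ agree on the sampled pair; rescale the empirical disagreement rate into an estimate $\widehat{d}$ of $\dist(G,\widetilde{G})$ accurate to within additive $\beta/2$ (Chernoff--Hoeffding), and reject if $\widehat{d} > \epsilon_2 + \beta/2$. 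This uses $O(q_{\cR}/\beta^2)$ queries to $G$. \textbf{(ii) Property check:} simulate $\cT$ on $\widetilde{G}$, answering each of its $q_{\cT}$ edge queries by one call to $\cR$; accept iff $\cT$ accepts. This uses $O(q_{\cT}\cdot q_{\cR})$ queries to $G$.

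For correctness, suppose first that $G$ is $\epsilon_1$-close to $\cP_s$. Then with probability at least $1-\delta$ over $\cR$'s coins, $\widetilde{G}\in\cP_{\phi(s)}$ (hence in particular lies in $\cP_{\psi(\phi(s))}$, so $\cT$ accepts with probability at least $2/3$) and $\dist(G,\widetilde{G})\le \epsilon_2$ (so check (i) passes with high probability). A union bound, after boosting the individual success probabilities to $5/6$ by constant-factor repetitions, yields overall acceptance with probability at least $2/3$. Now suppose $G$ is $(\epsilon_2+\epsilon'+\beta)$-far from $\cP_{\psi(\phi(s))}$ and assume for contradiction that $\cT'$ accepts. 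Check (i) guarantees $\dist(G,\widetilde{G})\le \epsilon_2+\beta$ and check (ii) guarantees $\dist(\widetilde{G},\cP_{\psi(\phi(s))})\le \epsilon'$; the triangle inequality then gives $\dist(G,\cP_{\psi(\phi(s))})\le \epsilon_2+\beta+\epsilon'$, contradicting the hypothesis. Hence $\cT'$ rejects with probability at least $2/3$.

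The main obstacle I anticipate is the distance-estimation step in the sparse graph model: since distance is normalized by $m$, a naive uniform sampling of vertex pairs does not give the desired $\beta/2$ accuracy in only $O(1/\beta^2)$ samples without proper rescaling. I expect this to be resolved by the same rescaled sampling used in Brakerski's proof, relying only on standard Chernoff concentration; this is the only model-specific ingredient beyond the high-level structure. The generalization from Brakerski's non-parametrized setting to parametrized $\cP_s$ is then immediate: the role of the single property is played by the nested pair $\cP_{\phi(s)}\subseteq \cP_{\psi(\phi(s))}$, and the triangle-inequality calculation above goes through unchanged.
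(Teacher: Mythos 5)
Your proposal matches the paper's proof of Theorem~\ref{t:tolerant_param}: both fix $\cR$'s coins so it acts as a single edge oracle for $\widetilde{G}$, estimate $\dist(G,\widetilde{G})$ with $O(1/\beta^2)$ samples and reject if the estimate exceeds $\epsilon_2 + \beta/2$, then simulate $\cT$ on $\widetilde{G}$ through $\cR$, with the same completeness argument ($\widetilde{G}\in\cP_{\phi(s)}$ and close to $G$) and the same soundness dichotomy (either step~2 rejects or $\widetilde{G}$ is $\epsilon'$-far from $\cP_{\psi(\phi(s))}$ and $\cT$ rejects). The sparse-model distance-estimation subtlety you flag is real, but the paper's own appendix proof glosses over it just as much---it says only ``sampling $(u,v)\in V\times V$'' without supplying the rescaling---so this is a shared imprecision rather than a gap in your argument relative to theirs.
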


\later{
\ifabstract{
\section{Proof of Theorem~\ref{t:tolerant_param}}\label{app:proof}
}\fi
\begin{proof}
The algorithm and proof follows that of \cite[Theorem~3.1]{B08}.

The tolerant tester for $\cP_s$ is as follows:
\begin{enumerate}
\item%
Run $\cR$ on $G$ and estimate $\dist(G,\widetilde{G})$ to within additive
error of $\beta/2$ by sampling $(u,v) \in V \times V$.

\item%
If estimate of $\dist(G,\widetilde{G})$ exceeds $\epsilon_2 + \beta/2$, \textbf{reject}.

\item%
Run $\cT$ on $\widetilde{G}$ using $\cR$, and \textbf{accept} if and only if
$\cT$ accepts.

\end{enumerate}
If $G$ is $\epsilon_1$-close to $\cP_s$, then with high probability
$\dist(G,\widetilde{G}) \le \epsilon_2$. Therefore, the algorithm passes step 2
and with high probability $\cT$ accepts $\widetilde{G}$, since
$\widetilde{G} \in \cP_{\phi(s)}$.
If $G$ is $(\epsilon_2 + \epsilon' + \beta)$-far from $\cP_{\psi(\phi(s))}$, then
either $\dist(G,\widetilde{G}) > \epsilon_2 + \beta/2$, in which case step 2 fails
with (constant) high probability, or $\dist(G,\widetilde{G}) \le \epsilon_2 + \beta$ in which
case $\widetilde{G}$ is $\epsilon'$-far from $\cP_{\psi(\phi(s))}$ and so $\cT$ rejects
with high probability.
\end{proof}
}

Taking $\phi$ and $\psi$ to be identity, one gets as a special case the
result of~\cite[Theorem~3.1]{B08} that if $\cP$ is a graph property
with an $(\epsilon_1,\epsilon_2,\delta)$-local reconstructor and
an $\epsilon'$-property tester, then for all $\beta > 0$ it has an
$(\epsilon_1,\epsilon_2+\epsilon'+\beta)$-tolerant tester.



In this work, we give local reconstructors for several graph properties: connectivity
in undirected graphs, strong connectivity in directed graphs, and
small diameter in undirected graphs. To be precise, we prove
the following in Sections~\ref{s:conn},~\ref{s:sconn},~\ref{s:kconn} and~\ref{s:diam}
respectively.

\begin{theorem}\label{t:conn_recon}
There is an $(\epsilon, (1+\alpha)\epsilon, \delta)$-LR
for connectivity with query complexity~$O\left( \frac{1}{\delta\alpha\epsilon}\right)$.
\end{theorem}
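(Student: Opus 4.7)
The plan is to instantiate the leader-election template sketched in the introduction. I fix a distinguished super-node $v^\ast \in V$, draw an independent uniform rank $r(v) \in [0,1]$ for every vertex $v$ (shared across queries via a random seed, so that $\widetilde{G}$ is a single well-defined graph), and declare $v$ to be a \emph{leader} when $v$ holds the smallest rank among the vertices reached by a BFS from $v$ in $G$ that is truncated after $s = \Theta\!\left(\tfrac{1}{\delta\alpha\epsilon}\right)$ visited vertices. Define $\widetilde{G}$ to be $G$ together with the edge $(v,v^\ast)$ for every leader $v$. On a query $(u,v)$, the reconstructor first asks the neighbor oracle whether $(u,v) \in E(G)$; otherwise, if one of $u,v$ equals $v^\ast$, it runs the truncated BFS from the other endpoint to decide leadership. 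Each query thus costs $O(s) = O\!\left(\tfrac{1}{\delta\alpha\epsilon}\right)$ neighbor-oracle calls, matching the claimed query complexity.

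Connectivity of $\widetilde{G}$ will be the easy step: in any connected component $C$ of $G$, the vertex $v_{\min}(C) \in C$ of globally smallest rank in $C$ has its BFS confined to $C$, so its BFS minimum is itself; hence $v_{\min}(C)$ is a leader and is joined to $v^\ast$. Every vertex of $C$ is $G$-connected to $v_{\min}(C)$, therefore $\widetilde{G}$-connected to $v^\ast$.

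The main step is bounding $\dist(G,\widetilde{G}) \leq (1+\alpha)\epsilon$ with probability at least $1-\delta$. Since the added edges are all incident to $v^\ast$, it suffices to count leaders. Any component $C$ of size at most $s$ is fully explored by the truncated BFS, so $C$ contributes exactly one leader, namely $v_{\min}(C)$; since $G$ is $\epsilon$-close to connected, it has at most $\epsilon m + 1$ components, yielding at most $\epsilon m + 1$ small-component leaders. For a component of size greater than $s$, the set of $s$ vertices explored from any $v$ is a deterministic function of $G$ (once a canonical BFS tie-breaking is fixed), so by exchangeability of the ranks $\Pr[v \text{ is a leader}] \leq 1/s$; summing over all $v$ gives an expected large-component leader count of at most $n/s$. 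A Markov step then gives at most $n/(\delta s)$ such leaders with probability $\geq 1-\delta$, so
$$
\dist(G,\widetilde{G}) \;\leq\; \epsilon + \frac{1}{m} + \frac{n}{\delta s m},
$$
and the choices $s = \Theta(1/(\delta\alpha\epsilon))$ together with $m = \Omega(n)$ force the last two terms to be at most $\alpha\epsilon$, giving $(1+\alpha)\epsilon$.

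The main subtlety is precisely this probabilistic accounting in large components, and in particular verifying that the set of $s$ BFS-explored vertices is genuinely independent of the rank function, which is what licenses the $1/s$ leader probability. Two secondary care points are (i) fixing the rank function once and for all across queries (standard shared-random-seed/pseudorandom-hash device) so that $\widetilde{G}$ is a single graph rather than something that drifts across queries, and (ii) implementing the truncated BFS so that its total number of neighbor-oracle queries stays $O(s)$ rather than being inflated by a few high-degree vertices.
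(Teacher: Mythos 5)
Your proposal is correct and matches the paper's proof essentially step for step: a fixed super-node $v_0$, an i.i.d.\ rank $r(v)$, a BFS truncated at $K=\Theta(1/(\delta\alpha\epsilon))$ vertices to decide leadership, the observation that each component's global rank-minimizer is always a leader (giving connectivity), the count of at most $\epsilon m+1$ leaders from small components, and a Markov bound on the expected $n/K$ leaders from large components. The two secondary care points you flag (a shared random seed so $\widetilde{G}$ is well-defined, and capping the BFS's neighbor-oracle calls at $O(K)$ despite high-degree vertices) are implicit in the paper and correctly handled.
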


\begin{theorem}\label{t:sconn_recon}
There is an $(\epsilon,(4+\alpha)\epsilon,\delta)$-LR
for strong connectivity
with query complexity~$O\left( \frac{1}{\delta\alpha\epsilon} \right)$.
\end{theorem}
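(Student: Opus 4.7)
The plan is to adapt the approach of Theorem~\ref{t:conn_recon} to the directed setting by performing leader election in both the forward (out-neighbor) and backward (in-neighbor) directions. Independently assign each vertex $v$ a uniform random rank $r(v)\in[0,1]$, designate a super-node $v^*$, and fix a size threshold $T = \Theta(1/(\delta\alpha\epsilon))$. The edge oracle for $\widetilde{G}$ on input $(u,v)$ returns $1$ if $(u,v)\in G$. If $v = v^*$, return $1$ iff $u$ is a \emph{forward leader}: performing forward BFS from $u$ collecting up to $T$ vertices yields a set $F_T(u)$ in which $r(u)$ is minimum. Symmetrically, if $u=v^*$, return $1$ iff $v$ is a \emph{backward leader} (defined via backward BFS $B_T(v)$). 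Otherwise return $0$. Each query performs at most one forward and one backward BFS of size $T$, so the query complexity is $O(T) = O(1/(\delta\alpha\epsilon))$.

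For correctness---that $\widetilde{G}$ is strongly connected---I would show every sink SCC contains a forward leader and every source SCC contains a backward leader. Given a sink SCC $S$, its minimum-rank vertex $w$ has $F(w) = S$ (no edges leave $S$), so $F_T(w) \subseteq S$ and $w$ attains the minimum rank of $F_T(w)$. Hence $w$ is a forward leader with added edge $(w, v^*)$, and any vertex $u$ that reaches $S$ also reaches $v^*$ via $u \to \cdots \to w \to v^*$. The symmetric argument using source SCCs shows $v^*$ reaches every vertex, so $\widetilde{G}$ is strongly connected deterministically given the ranks.

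For closeness, the expected number of added edges equals $\sum_v 1/|F_T(v)| + \sum_v 1/|B_T(v)|$. Each sum splits into a \emph{capped} contribution from vertices whose BFS reached $T$ vertices (bounded by $n/T = O(\alpha\epsilon m)$) and a \emph{closed} contribution, which per SCC equals $|S|/|F(S)|$ for the forward sum. The closed contribution is dominated by sink (resp.\ source) SCCs; since each edge modification can change the sink count by at most one, $G$ has at most $1+\epsilon m$ sink SCCs and $1+\epsilon m$ source SCCs. Non-sink ``near-sink'' SCCs are handled by refining the leader definition: $u$ is a forward leader only if additionally a backward BFS restricted to $F_T(u)$ confirms $F_T(u)$ equals $u$'s SCC, costing $O(T)$ extra queries. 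Combining forward and backward contributions and applying Markov's inequality to control the capped contributions with failure probability $\delta$ yields $\dist(G,\widetilde{G}) \leq (4+\alpha)\epsilon$ after appropriately rescaling $\alpha$.

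The main obstacle is the accounting of non-sink ``near-sink'' SCCs in the closed contribution: without the refined leader check, pathological structures such as long directed paths of singleton SCCs can contribute $\Omega(\log T)$ extra leaders, which may exceed the $\epsilon m$ budget when $\epsilon m$ is small. The refined backward-BFS-restricted-to-$F_T(u)$ check ensures that only vertices in genuine sink SCCs of size $\leq T$ are marked as sink leaders, preserving both the $(4+\alpha)\epsilon$ distance bound and the $O(T)$ query complexity.
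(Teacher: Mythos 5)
Your approach matches the paper's in spirit: super-node, rank-based leader election, forward/backward search, and a Tarjan-like check to avoid the ``funnel-into-a-small-sink'' failure mode that you correctly identify. However, as stated your refinement breaks correctness for \emph{large} sink components. You make ``$F_T(u)$ equals $u$'s SCC'' a universal necessary condition for being a forward leader, but for a sink SCC $S$ with $|S| > T$, every forward BFS caps at $T$ vertices, so $F_T(u) \subsetneq S$ for all $u \in S$ and no vertex of $S$ ever passes the check. Such a sink component would get no arc to $v^*$, and $\widetilde G$ would fail to be strongly connected. The check must apply \emph{only} when the forward BFS closes (sees fewer than $T$ vertices); when the BFS is capped, a vertex must remain a leader candidate based solely on minimality of rank in $F_T(u)$ (which is then just the Markov-controlled $1/T$-per-vertex contribution). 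This two-case structure is exactly what the paper's procedure does with ``Forward DFS sees at least $K$ vertices \emph{or} $\textsc{InSmallSink}(v_1)$.''

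A secondary point: the paper, when the forward search closes and $v$ is not in a small sink, does not simply reject $v$---it falls back to an undirected BFS and elects $v$ if it has minimal rank in its (small) connected component. This adds an extra $\epsilon m + 1$ per direction to the edge budget, giving the $4\epsilon m$ term. Your stricter rule (reject outright unless in a small sink or capped) would, once the case split above is fixed, actually give a \emph{tighter} bound of roughly $(2+\alpha)\epsilon$ rather than $(4+\alpha)\epsilon$; your quoted $(4+\alpha)\epsilon$ under-sells your own variant and suggests the accounting was not carried through. Also note your closed-contribution identity ``$\sum_v 1/|F_T(v)|$ per SCC equals $|S|/|F(S)|$'' is the expectation for the \emph{unrefined} rule; after the refinement the closed contribution is simply one per small sink SCC, which is the cleaner statement you want.
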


\begin{theorem}\label{t:kconn_recon}
There is an $(\epsilon, (2+\alpha)\epsilon + ck/2, k(\delta+\gamma))$-LR
for $k$-connectivity
with query complexity
$O\left(
\left( \left(\frac{1}{c} + 1 \right)k\right)^{k}
t^{3k} (t+k)^{k} \log^k(t+k)\log^k(Cn)
\right)$
for $n \ge \frac{k}{c}$,
where $C = \frac{1}{\ln(1/(1-\gamma))}$
and $t = \frac{\ln(Cn)}{\delta\alpha\epsilon}$.
\end{theorem}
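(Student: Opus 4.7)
The plan is to prove Theorem~\ref{t:kconn_recon} by induction on $k$, using the local reconstructor from Theorem~\ref{t:conn_recon} as the seed of a recursive ``add-a-super-node'' construction that the paper explicitly advertises in Section~1. For the inductive step, I would assume an LR $\cR_{k-1}$ for $(k-1)$-connectivity, feed it the input graph $G$ (which is $\epsilon$-close to $k$-connected, hence also $\epsilon$-close to $(k-1)$-connected), and obtain an edge oracle for some $(k-1)$-connected graph $G_{k-1}$. Using the neighbor-oracle transformation of Section~\ref{s:neighbor}, I would convert that edge oracle into a neighbor oracle for $G_{k-1}$, at a bounded blow-up in query cost.

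On $G_{k-1}$ I would apply one more round of the paper's leader-election template: assign each vertex an i.i.d.\ uniform random rank, declare a vertex a \emph{leader} if it has the minimum rank within its radius-$B$ neighborhood in $G_{k-1}$, introduce a fresh super-node $s_k$, and add an edge from $s_k$ to every leader, yielding $\widetilde{G}_k$. The edge oracle $\cR_k$ answers a query $(u,v)$ involving $s_k$ by running a radius-$B$ BFS in $G_{k-1}$ through the neighbor oracle above to test the leader predicate, and delegates all other queries to $\cR_{k-1}$. For correctness I would verify $k$-connectivity by a Menger-style argument on cut sets $S$ of size $k-1$: if $s_k \in S$ then $\widetilde{G}_k \setminus S$ restricts to $G_{k-1}$ minus $k-2$ vertices and is connected by $(k-1)$-connectivity of $G_{k-1}$; if $s_k \notin S$ then every surviving vertex reaches $s_k$ through a leader in its (shrunken but still nonempty) radius-$B$ ball, provided $B$ is chosen so that every ball contains a leader with probability at least $1-\gamma$ even after $k-1$ vertex deletions.

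The distance bound telescopes as $\dist(G,\widetilde{G}_k) \le \dist(G,G_{k-1}) + \dist(G_{k-1},\widetilde{G}_k)$, where the inductive hypothesis controls the first summand and a $c/2$ slack per level — paid for by tuning $B$ so the expected number of leaders is at most $cm/k$ — controls the second; summed across levels, together with the doubling from the triangle inequality between the original and the reconstructed intermediate graph, this gives $(2+\alpha)\epsilon + ck/2$. A union bound over the $k$ phases (inductive failure of $\cR_{k-1}$ plus the ``empty ball'' event at the new layer) yields the $k(\delta+\gamma)$ failure probability. The main obstacle is the query-complexity bookkeeping: each top-level query costs one radius-$B$ BFS in $G_{k-1}$, and each neighbor query on $G_{k-1}$ costs $q_{\cR_{k-1}}$ through the Section~\ref{s:neighbor} transformation, so the complexity satisfies a geometric recurrence whose solution is the product $\bigl((1/c+1)k\bigr)^{k} t^{3k}(t+k)^{k}\log^{k}(t+k)\log^{k}(Cn)$ promised by the theorem. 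Choosing the per-level ball radius $B = \Theta(t)$ large enough that each ball stays leader-populated after $k-1$ vertex deletions, yet small enough not to blow up beyond this bound, is where the bulk of the technical care will go.
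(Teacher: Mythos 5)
Your plan identifies the right scaffolding (induction on $k$, a fresh layer of super-node edges at each level, leader election via random ranks, and reuse of the Section~\ref{s:neighbor} neighbor-oracle machinery), but there are two gaps that break the proof.

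First, your search primitive is the wrong one. The paper's definition of $k$-connectivity is \emph{edge} connectivity: every $U\subsetneq V$ must have $\deg(U)\ge k$. The obstructions to promoting a $(k-1)$-connected graph are its $(k-1)$-\emph{extreme sets}, i.e.\ sets $U$ with $\deg(U)=k-1$ and $\deg(W)>k-1$ for all $W\subsetneq U$. A minimum-rank-in-a-radius-$B$-BFS-ball leader rule does not detect these: a small extreme set has only $k-1$ edges leaving it, but those edges can lead into a huge, well-connected region, so the BFS ball of the set's minimum-rank vertex spills far outside the set, and that vertex may fail the leader test even though its extreme set is exactly the set that still needs an extra edge. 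Consequently you cannot guarantee that every $(k-1)$-extreme set contributes a new edge, so the output graph need not be $k$-connected. The paper replaces BFS with the extreme set search of Goldreich--Ron/Marko--Ron (Lemma~\ref{l:kconn} and the surrounding text), which grows a candidate cut edge-by-edge and terminates precisely when it has found a set of degree $<k$; the $t^3(t+k)\log(t+k)\log(Cn)$ factor in the query complexity comes from iterating that randomized search, and it is not reproducible from a BFS. Relatedly, your Menger-style correctness argument removes $k-1$ \emph{vertices} and invokes ``$(k-1)$-connectivity of $G_{k-1}$'' to conclude connectivity of the remainder — that is the vertex-connectivity Menger theorem, not the edge version the paper is working with; the paper's correctness argument (Lemma~\ref{l:kconn}) is instead a case analysis on how $V_0$ intersects each extreme set.

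Second, a single fresh super-node $s_k$ per level does not give a neighbor oracle, and the Section~\ref{s:neighbor} transformation cannot be applied afterwards as a black box (the paper explicitly flags that the transformation is not black-box). A lone $s_k$ can acquire $\Theta(n)$ new neighbors, so computing $N_{G_k}(s_k)$ costs $\Theta(n)$ edge-oracle calls, which destroys the recursion. The paper instead designates $cn$ super-nodes $V_0$ from the outset, hashes each ordinary vertex to a \emph{set} $h(v)$ of $k$ super-nodes so that each super-node receives only $O(k/c)$ candidate neighbors, and separately wires $V_0$ into a $k$-connected ring (Lemma~\ref{l:supernode_kconn}). That ring is also the actual source of the additive $ck/2$ in the distance bound — it is $\lceil k/2\rceil\cdot cn$ edges among super-nodes — not a budget of $cm/k$ leaders per level as you propose, and the paper further observes that the same $V_0$ and ring can be reused across all $k$ levels so the $ck/2$ is not paid $k$ times. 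Your proposal would need the extreme-set search, the edge-connectivity case analysis, and the hashed multi-super-node construction to be made correct.
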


\begin{theorem}\label{t:diam_recon}
There is an
$(\epsilon,(3+\alpha)\epsilon+\frac1m+c,\delta+\frac1n, \phi(s)=2s+2)$-LR
for diameter at most $D$
with query complexity
$O(\frac{1}{c \delta\alpha\epsilon}\Delta^{O(\Delta\log\Delta)}\log n)$
where $\Delta = (\overline{d}/\epsilon)^{O(1/\epsilon)}$ and $\overline{d} = 2m/n$
is the bound on average degree.
\end{theorem}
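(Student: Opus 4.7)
The plan is to adapt the super-node/leader template from Sections~\ref{s:conn}--\ref{s:kconn}: introduce a fresh super-node $v^*$, elect a small set of leaders by a random-rank BFS, and build $\widetilde G$ from $G$ by adding edges from $v^*$ to every leader and to a small set of ``bad'' vertices. The role of connected components in the connectivity reconstructor is now played by depth-$D$ neighborhoods: a vertex $v$ is a leader iff $r(v)$ is the minimum rank in its truncated depth-$D$ BFS ball, so every vertex sits within $D$ steps of some leader. Any two vertices $u,w$ are joined in $\widetilde G$ by going from $u$ to its leader in at most $D$ steps, jumping to $v^*$, jumping to $w$'s leader, and descending at most $D$ steps to $w$, giving a path of length at most $2D+2$, which matches $\phi(s)=2s+2$.

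To keep the BFS affordable in the general sparse model, I impose a degree threshold $\Delta=(\overline d/\epsilon)^{O(1/\epsilon)}$ and call a vertex \emph{bad} if $\deg(v)>\Delta$. A simple averaging on $\overline d = 2m/n$ gives $|\text{bad}| = O(\overline d\, n/\Delta) \le \epsilon m$ for our choice of $\Delta$, so connecting every bad vertex directly to $v^*$ costs only $O(\epsilon)$ distance while placing every bad vertex within distance $1$ of $v^*$. The leader BFS therefore runs only through good vertices and is cut off whenever it reaches a bad vertex, and ranks are derived on-the-fly from a single random seed so that repeated queries to $\cR$ are consistent.

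The reconstructor $\cR$ answers a query $(u,v)$ as follows: if one endpoint is $v^*$, accept iff the other endpoint is either bad or a leader (the latter decided by the truncated BFS); otherwise return $E_G(u,v)$ from the neighbor oracle. A depth-$D$ BFS truncated at degree $\Delta$ visits at most $\Delta^D$ vertices and inspects each of their $O(\Delta)$ neighborhoods, which accounts for the $\Delta^{O(\Delta\log\Delta)}$ factor in the nontrivial regime of interest, while the $\frac{1}{c\delta\alpha\epsilon}\log n$ prefactor bundles the $O(\log(n/\delta))$ rank bits with the constant-factor repetitions that use the slack parameter $c$ to concentrate the leader count.

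The main obstacle is the distance bound $\dist(G,\widetilde G)\le (3+\alpha)\epsilon+\frac1m+c$, which reduces to controlling the expected number of leader edges. The key lemma is that whenever $G$ is $\epsilon$-close to some $G^*$ of diameter $\le D$, most good vertices in $G$ have truncated depth-$D$ balls containing almost all good vertices: each of the $\le \epsilon m$ edge edits separating $G$ from $G^*$ can only shrink $|B_D^{\rm good}(v)|$ for vertices whose shortest $G^*$-path to the bulk passes through that edit, so $\mathbb{E}[\#\text{leaders}] = \sum_{v\text{ good}} 1/|B_D^{\rm good}(v)|$ is dominated by $O(\epsilon m)$ ``badly connected'' vertices plus a $1/n$-sized contribution from the remainder. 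Summing the leader-edge contribution, the bad-vertex contribution, and the additive $1/m$ from the single super-node — with the $\alpha$-slack from the Markov-style concentration and the $c$-slack absorbing the error in the leader-count estimate — yields the $(3+\alpha)\epsilon+\frac1m+c$ bound, and the failure probability $\delta+1/n$ follows from a union bound over the rank-concentration events and the event that the number of bad vertices exceeds its expectation.
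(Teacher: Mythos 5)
The central claim of your proposal fails: electing a vertex $v$ as a leader whenever $r(v)$ is the minimum rank in $B_D(v)$ does \emph{not} produce a dominating set in $G^D$, so the guarantee ``every vertex sits within $D$ steps of some leader'' is simply false. This is a single round of Luby's algorithm, which yields an independent set in $G^D$ but not a maximal one. Concretely, take a path $v_1 - v_2 - \cdots - v_{10}$ with $D=1$ and ranks $r(v_2)$ smallest, $r(v_{10})$ second smallest, and $r(v_1) > r(v_3) > r(v_4) > \cdots > r(v_{10})$: the only leaders are $v_2$ and $v_{10}$, yet $v_4,\ldots,v_8$ have no leader in their depth-$1$ ball. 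In general a ``chain'' of locally-decreasing ranks can push the nearest leader $\Theta(n)$ hops away, and then the super-node trick does not bound the diameter of $\widetilde G$ by $2D+2$. Your size bound is actually correct (two leaders at distance $\le D$ in $G$ would each have to have strictly smaller rank than the other, so leaders do form an independent set in $G^D$ and Corollary~\ref{c:close} applies), but size is not the problem; \emph{domination} is, and the paper's high-level discussion explicitly flags exactly this: it needs a \emph{maximal} independent set (equivalently a dominating set) in $G^D$, not just an independent set.

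The paper's proof instead invokes the local maximal independent set algorithm of~\cite{RTVX11} (iterated Luby) on $G^K$, restricted to vertices of degree at most $\overline d/\epsilon$, and adds \emph{all} high-degree vertices (of which there are at most $\epsilon n$) to the leader set by fiat. Maximality of the MIS is what guarantees domination, hence $\diam(\widetilde G)\le 2K+2$; the size of the MIS is controlled by Corollary~\ref{c:close}. The threshold for ``high degree'' in the paper is $\overline d/\epsilon$, not $\Delta$; your $\Delta = (\overline d/\epsilon)^{O(1/\epsilon)}$ is the \emph{effective} maximum degree of $G^K$ after setting $K=\min\{D, 2n/(\epsilon m)\}$ via the~\cite{AGR} bound, and $\Delta^{O(\Delta\log\Delta)}$ is the RTVX11 running time in terms of that effective degree --- not a cost of a plain depth-$D$ BFS. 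Two smaller gaps in your write-up: (1) you must first patch $G$ to a connected $G'$ with $\textsc{Mod-Connected}$ before the~\cite{AGR} bound on $K$ applies (this is where the additive $c$ and the extra $\epsilon$ in the distance bound, and the $1/m$, come from); and (2) your ``key lemma'' that $\epsilon$-closeness forces most depth-$D$ balls to be large is both unproven and unnecessary --- a few edge deletions can disconnect a constant fraction of vertices from the bulk, so nothing like it holds in general, and the paper never needs it.
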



Combining Theorem~\ref{t:tolerant_param} with each of Theorems~\ref{t:conn_recon},
\ref{t:sconn_recon}, \ref{t:kconn_recon}, and~\ref{t:diam_recon}, along with property testers
for each of the four properties (see~\cite{GR02}, \cite{BR02}, \cite{MR09}, and \cite{PR02}),
we immediately obtain the following tolerant testers.

\begin{cor}\label{c:conn_tol}
For all $\alpha, \beta, \epsilon > 0$, there is
an $(\epsilon, (1+\alpha)\epsilon + \beta)$-tolerant tester
for connectivity.
\end{cor}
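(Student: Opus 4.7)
The plan is to obtain this tolerant tester as an essentially immediate corollary of Theorem~\ref{t:tolerant_param} combined with Theorem~\ref{t:conn_recon} and the known property tester for connectivity from~\cite{GR02}. Concretely, I would fix arbitrary $\alpha, \beta, \epsilon > 0$, set $\beta' = \beta/2$, and pick a property tester parameter $\epsilon' = \beta/2$ so that $\epsilon' + \beta' = \beta$.

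First, I would invoke Theorem~\ref{t:conn_recon} with constant failure probability $\delta$ (say $\delta = 1/6$) to obtain an $(\epsilon, (1+\alpha)\epsilon, \delta)$-local reconstructor $\cR$ for connectivity with query complexity $q_\cR = O(1/(\delta\alpha\epsilon)) = O(1/(\alpha\epsilon))$. Next, I would invoke the connectivity tester of~\cite{GR02} with proximity parameter $\epsilon' = \beta/2$ to obtain an $\epsilon'$-property tester $\cT$ for connectivity with query complexity $q_\cT$ depending only on $\beta$.

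Then I would apply Theorem~\ref{t:tolerant_param} to $\cR$ and $\cT$, with the parametrization trivial (both $\phi$ and $\psi$ are the identity, since connectivity is a non-parametrized property). The theorem yields an $(\epsilon_1, \epsilon_2 + \epsilon' + \beta')$-tolerant tester, which with our choices is an $(\epsilon, (1+\alpha)\epsilon + \beta/2 + \beta/2) = (\epsilon, (1+\alpha)\epsilon + \beta)$-tolerant tester for connectivity, as desired. The query complexity is $O((1/\beta'^2 + q_\cT)\,q_\cR)$, which is bounded in terms of $\alpha, \beta, \epsilon$ only.

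There is no real obstacle here; the only care required is the bookkeeping of parameters to ensure that the gap term produced by Theorem~\ref{t:tolerant_param}, namely $\epsilon' + \beta'$, can be absorbed into the single slack parameter $\beta$ in the statement. Splitting $\beta$ evenly between the tester proximity parameter and the sampling slack accomplishes this cleanly, and the success probability amplification to $2/3$ and $1/3$ required in the definition of a tolerant tester follows by standard majority repetition of $\cR$ and $\cT$ before combining them.
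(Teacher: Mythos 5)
Your proposal is correct and takes exactly the same route as the paper: combine Theorem~\ref{t:tolerant_param} with the local reconstructor of Theorem~\ref{t:conn_recon} and the connectivity property tester of~\cite{GR02}, taking $\phi$ and $\psi$ to be the identity and splitting the slack $\beta$ between the tester's proximity parameter and the theorem's sampling-error term. The paper states this one-line derivation without the explicit bookkeeping, which you have filled in correctly.
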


\begin{cor}\label{c:sconn_tol}
For all $\alpha, \beta, \epsilon > 0$,  there is
an $(\epsilon, (4+\alpha)\epsilon + \beta)$-tolerant tester
for strong connectivity.
\end{cor}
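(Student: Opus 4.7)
The plan is to obtain Corollary~\ref{c:sconn_tol} as a direct instantiation of Theorem~\ref{t:tolerant_param} in the non-parametrized setting, feeding in the strong connectivity local reconstructor from Theorem~\ref{t:sconn_recon} together with an off-the-shelf property tester for strong connectivity (the one of Bender and Ron~\cite{BR02}). Since strong connectivity is not a parametrized property, I take $\phi$ and $\psi$ to be the identity, so $\psi \circ \phi$ is also the identity, and the conclusion of Theorem~\ref{t:tolerant_param} collapses to the simpler statement: if there is an $(\epsilon_1,\epsilon_2,\delta)$-LR and an $\epsilon'$-property tester, then for every $\beta' > 0$ there is an $(\epsilon_1, \epsilon_2 + \epsilon' + \beta')$-tolerant tester.

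Concretely, given the target slack $\beta > 0$ from the corollary statement, I would split it into two equal pieces: invoke the property tester of~\cite{BR02} with distance parameter $\epsilon' = \beta/2$, and pick the additive error parameter $\beta' = \beta/2$ in Theorem~\ref{t:tolerant_param}. Plugging in $\epsilon_1 = \epsilon$ and $\epsilon_2 = (4+\alpha)\epsilon$ from Theorem~\ref{t:sconn_recon} then yields a tolerant tester with parameters
\[
\bigl(\epsilon,\; (4+\alpha)\epsilon + \tfrac{\beta}{2} + \tfrac{\beta}{2}\bigr) = \bigl(\epsilon,\; (4+\alpha)\epsilon + \beta\bigr),
\]
as required.

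The only parameter left to handle is the failure probability $\delta$ of the local reconstructor. Since Theorem~\ref{t:tolerant_param} requires its three sampling/reconstruction steps (the distance estimate, the consistent behavior of $\cR$, and the property tester $\cT$) each to succeed with constant probability, I would fix $\delta$ to a sufficiently small absolute constant (say $1/10$) so that a union bound still yields overall success probability at least $2/3$ on $\epsilon_1$-close inputs and rejection probability at least $2/3$ on $\epsilon_2$-far inputs, matching the tolerant tester definition. The resulting query complexity is $O((1/\beta^2 + q_{\cT}) q_{\cR})$ with $q_{\cR} = O(1/(\alpha\epsilon))$ from Theorem~\ref{t:sconn_recon}.

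There is essentially no obstacle here: the corollary is a mechanical specialization of Theorem~\ref{t:tolerant_param}. The only thing to be careful about is that the strong-connectivity tester of~\cite{BR02} must itself be invoked through the edge oracle provided by the reconstructor $\cR$ (and not on the original graph $G$), which is exactly what step~3 of the algorithm in the proof of Theorem~\ref{t:tolerant_param} does; so no modification of the cited tester is needed.
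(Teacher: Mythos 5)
Your proposal is correct and matches the paper's approach exactly: the paper obtains Corollary~\ref{c:sconn_tol} by the same mechanical instantiation of Theorem~\ref{t:tolerant_param} with the local reconstructor of Theorem~\ref{t:sconn_recon} and the strong-connectivity tester of~\cite{BR02}. The extra bookkeeping you spell out (splitting $\beta$, fixing $\delta$ to a small constant, running $\cT$ on the reconstructed graph via $\cR$) is all implicit in the paper's one-line derivation and its proof of Theorem~\ref{t:tolerant_param}.
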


\begin{cor}\label{c:kconn_tol}
For all $\alpha, \beta, c, \epsilon > 0$, there is an
$(\epsilon, (2k+\alpha)\epsilon + ck/2 + \beta)$-tolerant tester
for $k$-connectivity.
\end{cor}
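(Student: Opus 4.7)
The plan is to obtain Corollary~\ref{c:kconn_tol} as a direct composition of Theorem~\ref{t:tolerant_param} with the $k$-connectivity local reconstructor of Theorem~\ref{t:kconn_recon} and an off-the-shelf $k$-connectivity property tester from \cite{GR02} or \cite{MR09}. Since $k$-connectivity (for fixed $k$) is not parametrized in the sense of Section~\ref{s:prelim}, the maps $\phi$ and $\psi$ in Theorem~\ref{t:tolerant_param} are both the identity, so the composition specializes to the statement that an $(\epsilon_1,\epsilon_2,\delta)$-LR together with an $\epsilon'$-property tester yields an $(\epsilon_1,\epsilon_2+\epsilon'+\beta')$-tolerant tester for every $\beta'>0$.

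First I would instantiate Theorem~\ref{t:kconn_recon} with the $\alpha$ and $c$ specified in the corollary, choosing the internal parameters $\delta,\gamma$ small enough that $k(\delta+\gamma)$ is bounded by a small constant (say $1/9$), which keeps the stated query complexity sublinear in $m$. This gives an $(\epsilon,(2+\alpha)\epsilon+ck/2,1/9)$-local reconstructor for $k$-connectivity.

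Next I would invoke the $k$-connectivity property tester of \cite{GR02} or \cite{MR09}, calling it as an $\epsilon'$-property tester with proximity parameter $\epsilon'=(2k-2)\epsilon+\beta/2$. Substituting both components into Theorem~\ref{t:tolerant_param} with slack $\beta'=\beta/2$ then produces an $(\epsilon,\;(2+\alpha)\epsilon+ck/2+(2k-2)\epsilon+\beta/2+\beta/2)$-tolerant tester, and collecting the $\epsilon$-terms on the right gives exactly $(2k+\alpha)\epsilon+ck/2+\beta$, as required.

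Because everything reduces to invoking results already stated in the paper or in prior work, there is no genuine obstacle in the argument; the only point requiring care is the parameter bookkeeping, namely splitting $\beta$ between the additive slack of Theorem~\ref{t:tolerant_param} and the proximity parameter handed to the external property tester so that the leading coefficient of $\epsilon$ in the final guarantee comes out to $2k+\alpha$ rather than $2+\alpha$. The query complexity of the resulting tolerant tester is then the product of the reconstructor's per-edge query complexity (as in Theorem~\ref{t:kconn_recon}) with the combined $O(1/\beta^2+q_{\cT})$ overhead from Theorem~\ref{t:tolerant_param}, which remains $o(m)$ for constant $\alpha,\beta,c,\epsilon$.
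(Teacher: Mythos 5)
Your decomposition is the right one --- the corollary does follow by plugging the local reconstructor of Theorem~\ref{t:kconn_recon} and an off-the-shelf $k$-connectivity tester into Theorem~\ref{t:tolerant_param} --- and your arithmetic checks out. But the mechanism by which you obtain the coefficient $2k+\alpha$ differs from what the paper intends, and the difference is instructive.

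You take the statement of Theorem~\ref{t:kconn_recon} at face value, so your reconstructor contributes only $(2+\alpha)\epsilon + ck/2$, and you manufacture the remaining $2(k-1)\epsilon$ by deliberately running the external tester with the large proximity parameter $\epsilon' = (2k-2)\epsilon + \beta/2$. That is arithmetically valid, but notice it is also suspicious: if the theorem's distance bound were really $(2+\alpha)\epsilon + ck/2$, then choosing $\epsilon' = \beta/2$ instead would already give the \emph{stronger} tolerant tester with gap $(2+\alpha)\epsilon + ck/2 + \beta$, so the corollary as written would be needlessly weak. What is actually going on is that the distance bound proved inside Theorem~\ref{t:kconn_recon} is $(2+\alpha)k\epsilon + ck/2$ (each of the $k$ iterated correction steps contributes roughly $(2+\alpha)\epsilon$, and only the supernode edges can be amortized across steps); the displayed bound $(2+\alpha)\epsilon + ck/2$ in the theorem statement does not match its own proof. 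After reparametrizing $\alpha \mapsto \alpha/k$, the reconstructor's true distance parameter is $(2k+\alpha)\epsilon + ck/2$, and the intended derivation of the corollary is then the same trivial one used for Corollaries~\ref{c:conn_tol}, \ref{c:sconn_tol}, and \ref{c:diam_tol}: take $\epsilon'$ and the slack each to be $\beta/2$.

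So: your proposal is correct as a proof of the stated corollary, but it works around a discrepancy between Theorem~\ref{t:kconn_recon}'s statement and its proof rather than using the reconstructor's actual guarantee. Inflating the tester's proximity parameter buys you nothing that the reconstructor shouldn't already be paying for, and if you were going to take the theorem literally you would have been better served to observe that the corollary can be strengthened. The honest reading is that the extra factor of $k$ belongs to the reconstructor, not to $\epsilon'$.
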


\begin{cor}\label{c:diam_tol}
For all $D,\alpha,\beta,\epsilon > 0$ and constant $c < 1$, there is an
$(\epsilon, (3+\alpha)\epsilon + \frac1m + c + \beta, 4D+6)$-tolerant tester
for diameter at most $D$, for $n \ge \frac{k}{c}$.
\end{cor}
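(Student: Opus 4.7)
The plan is a direct composition: feed the $(\epsilon,(3+\alpha)\epsilon+\tfrac{1}{m}+c,\delta+\tfrac{1}{n},\phi(s)=2s+2)$-local reconstructor of Theorem~\ref{t:diam_recon} and the Parnas--Ron diameter tester of \cite{PR02} into Theorem~\ref{t:tolerant_param}. The reconstructor $\cR$ provides edge-oracle access to a $\widetilde{G}$ whose diameter is at most $\phi(D)=2D+2$ and which is $\epsilon_2$-close to the input. The tester $\cT$ of \cite{PR02}, applied on parameter $\phi(D)=2D+2$, accepts graphs of diameter at most $2D+2$ and rejects graphs that are $\epsilon'$-far from having diameter at most $\psi(2D+2)=2(2D+2)+2$, so $\psi$ is the doubling-plus-two relaxation characteristic of the Parnas--Ron analysis.

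The compositional arithmetic then slots in cleanly: $(\psi\circ\phi)(D) = \psi(2D+2) = 2(2D+2)+2 = 4D+6$, which matches the third coordinate of the stated tolerant tester exactly. By Theorem~\ref{t:tolerant_param}, for every free parameter $\beta'>0$ one obtains an $(\epsilon,\; \epsilon_2+\epsilon'+\beta',\; 4D+6)$-tolerant tester with query complexity $O\bigl((1/\beta'^{\,2}+q_{\cT})q_{\cR}\bigr)$. Substituting $\epsilon_2=(3+\alpha)\epsilon+\tfrac1m+c$ from Theorem~\ref{t:diam_recon} and absorbing the small additive terms $\epsilon'$ (from the choice of proximity parameter of $\cT$) and $\beta'$ into the single slack parameter $\beta$ of the corollary yields exactly the bound $(3+\alpha)\epsilon+\tfrac{1}{m}+c+\beta$ in the second coordinate.

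The only points that require a moment of verification are (i) that the Parnas--Ron tester fits the parametrized framework of Section~\ref{s:prelim} with the specific relaxation $\psi(s)=2s+2$, which is the guarantee proved in \cite{PR02} for the sparse-graph model adopted here, and (ii) that the resulting query complexity remains $o(m)$: this is immediate since $q_{\cR}$ from Theorem~\ref{t:diam_recon} is polylogarithmic in $n$ (with the degree-dependent prefactor $\Delta^{O(\Delta\log\Delta)}$ being independent of $m$) and $q_{\cT}$ is sublinear by \cite{PR02}. The requirement $n\ge k/c$ is inherited verbatim from Theorem~\ref{t:diam_recon}. No obstacles beyond checking that the parameter arithmetic lines up are expected; the corollary is, as advertised in the excerpt, an immediate consequence of the reconstructor-to-tolerant-tester reduction.
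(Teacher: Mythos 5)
Your proposal is correct and matches the paper's intended derivation exactly: the paper treats Corollary~\ref{c:diam_tol} as an immediate consequence of composing the local reconstructor of Theorem~\ref{t:diam_recon} with the Parnas--Ron diameter tester via Theorem~\ref{t:tolerant_param}, and your parameter arithmetic ($\phi(D)=2D+2$, $\psi(s)=2s+2$, hence $(\psi\circ\phi)(D)=4D+6$, with $\epsilon'$ and the inner slack $\beta'$ absorbed into $\beta$) is precisely the computation the paper leaves implicit. The one small inaccuracy is attributing the ``$n\ge k/c$'' condition to Theorem~\ref{t:diam_recon}, which does not actually state it (that condition appears in Theorem~\ref{t:kconn_recon}, and its presence in this corollary looks like a copy-over artifact in the paper itself), but this does not affect the substance of the argument.
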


\section{Local reconstruction of connectivity}\label{s:conn}

In this section we prove Theorem~\ref{t:conn_recon}. We begin by giving
the high level description of the algorithm and then we present
the implementation and analysis of the algorithm.

\subsection{High level description}

The basic idea behind the algorithm is as follows. We designate
a ``super-node'' $v_0$ and add edges from a few special vertices to $v_0$
so that the resulting graph is connected. Ideally, we have exactly one
special vertex in each connected component, since this number of edges
is both necessary and sufficient to make the graph connected.
Therefore, we reduce the problem to defining a notion of ``special'' that can
be determined quickly, that ensures that at least one node per component
is special and that likely not too many extra nodes per
component are special.
How does a given
vertex know whether it is special? Our algorithm tosses coins to randomly
assign a rank $r(v) \in (0,1]$ to each $v \in V(G)$. Then $v$ can explore its
connected component by performing a breadth-first search (BFS) and if $v$
happens to have the lowest rank among all vertices encountered, then $v$ is special.
The only problem with this approach is that if $v$ lies in a large connected component,
then the algorithm makes too many queries to $G$ to determine whether $v$ is special.
We fix this by limiting the BFS to $K$ vertices, where $K$ is a constant depending
only on a few parameters, such as success probability and closeness.
We then show that components larger than size $K$ do not contribute many more
special vertices.

\ifabstract{
We choose $K$ to be $\frac{m}{\delta\alpha\epsilon m-1}
= O\left(\frac{1}{\delta\alpha\epsilon}\right)$. Then the query
complexity of our algorithm is $O(K) = O\left(\frac{1}{\delta\alpha\epsilon}\right)$.
To see that $\widetilde{G}$ is connected, note that since the rank function
attains a minimum on some vertex in each connected component,
that vertex must get an edge to $v_0$. To see that $\widetilde{G}$ is
close to $G$, note that every component of size at most $K$ contributes at
most one edge and every remaining vertex contributes $1/K$ edges
on average. The result then follows from the fact that the number of components
is bounded by $\epsilon m+1$ and Markov's inequality.
The pseudocode and formal arguments are in Appendix~\ref{app:conn}.
}\fi

\later{

\iffull{
\subsection{Algorithm}
}\fi

\ifabstract{
\section{Algorithm for connectivity}\label{app:conn}
}\fi

We now give the algorithm. We first do some preprocessing.
In particular, we arbitrarily fix some vertex $v_0 \in V(G)$. Additionally,
we have a random oracle which, for each vertex $v \in V(G)$, it assigns
a random number $r(v) \in (0,1]$.

\begin{framed}
\begin{algorithmic}
\Procedure{Connected}{$v_1,v_2$}
\If{$E_G(v_1,v_2) = 1$}		\Comment{if edge is already in graph, it stays in the graph}
	\State \Return $1$
\Else
	\If{$v_0 \notin \{v_1,v_2\}$}  \Comment{do not add edge if neither endpoint is $v_0$}
		\State \Return $0$
	\Else	
		\State Let $v \in \{v_1,v_2\} \setminus \{v_0\}$
		\State BFS from $v$ up to $K$ vertices and let $U$ be the set
		of vertices visited
		\If{$r(v) < r(u)$ for all $u \in U$}	\Comment{check if $v$ is special}
			\State \Return $1$
		\Else
			\State \Return $0$
		\EndIf
	\EndIf
\EndIf
\EndProcedure
\end{algorithmic}
\end{framed}

The following lemma shows that the procedure does not add too many extra edges.
To prove it, we use the fact that if $G$ has at least $\epsilon m+2$ connected components,
then $G$ is $\epsilon$-far from being connected.

\begin{lemma}\label{l:connanal}
With probability at least $1 - \delta$, $\textsc{Connected}$ adds at most
$n/\delta K + \epsilon m + 1$ edges.
\end{lemma}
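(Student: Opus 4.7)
The plan is to split the vertices into two classes based on the size of their connected component and bound the number of special vertices (those that cause an edge to be added to $v_0$) in each class separately. Since each special vertex contributes at most one new edge, it suffices to bound the total number of special vertices.

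First, I would handle small components, i.e., connected components of size at most $K$. For such a component $C$, a BFS from any $v \in C$ of at most $K$ vertices visits all of $C$, so $v$ is special precisely when $r(v) = \min_{u \in C} r(u)$. Since the ranks are distinct with probability $1$, each such component contributes exactly one special vertex. To bound the number of components, I would invoke the standard fact that if $G$ is $\epsilon$-close to connected, then $G$ has at most $\epsilon m + 1$ connected components; otherwise one would need more than $\epsilon m$ edge additions to connect $G$. Hence small components contribute at most $\epsilon m + 1$ special vertices in total (deterministically).

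Second, I would handle large components (size $> K$). For such a vertex $v$, the BFS visits a set $U$ with $|U| = K$, and by symmetry of the random ranks $r(\cdot)$, the probability that $r(v)$ is the minimum among the $K$ ranks in $U$ is exactly $1/K$. By linearity of expectation, the expected number of special vertices lying in large components is at most $n/K$. Applying Markov's inequality, with probability at least $1 - \delta$ the number of such special vertices is at most $n/(\delta K)$.

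Combining the two bounds via a union-of-events argument (the small-component bound is deterministic, so no union bound is even needed), we conclude that with probability at least $1 - \delta$ the procedure adds at most $n/(\delta K) + \epsilon m + 1$ edges. The only subtle point — and the main thing to articulate carefully — is the claim about the number of connected components of an $\epsilon$-close-to-connected graph; everything else is linearity of expectation plus Markov. I would briefly justify this claim (if $G$ has $c$ components, then any connected $G'$ differs from $G$ on at least $c - 1$ edges, so $c - 1 \le \epsilon m$).
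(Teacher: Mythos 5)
Your proof is correct and follows the same decomposition as the paper's: split components into small and large by comparison with $K$, observe that small components contribute at most one edge each and that there are at most $\epsilon m + 1$ of them, bound the expected contribution of large-component vertices by $n/K$, and apply Markov's inequality. The paper states the component-count bound without proof; you spell out the justification, but the argument is otherwise identical.
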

\begin{proof}
Let $X$ be the number of edges added by the local reconstructor.
Call a connected component $C$ \emph{small} if $|C| < K$ and \emph{large} otherwise.
Let $Y$ be the number of edges contributed by large components. Each small component
contributes exactly one edge, hence $X = i + Y$ where $i$ is the number of small components.
Since $G$ is $\epsilon$-close to being connected, $i \le \epsilon m +1$. Moreover,
each vertex in a large component contributes an edge with probability at most $1/K$.
By Markov's inequality,
$$
\Pr\left[Y > \frac{n}{\delta K}\right] \le \Pr\left[ Y > \frac{E[Y]}{\delta} \right] \le \delta. 
$$
hence
$$
\Pr\left[ X > \frac{n}{\delta K} + \epsilon m + 1 \right]
\le \Pr \left[ Y > \frac{n}{\delta K} \right] \le \delta.
$$
\end{proof}


\begin{proof}[Proof of Theorem~\ref{t:conn_recon}]
Let $\cP$ be the family of connected graphs on $n$ vertices.
The local reconstructor $\cR$ will run \textsc{Connected} with
$K = \frac{m}{\delta\alpha\epsilon m-1} = O\left( \frac{1}{\delta\alpha\epsilon} \right)$.
Clearly $\cR$ has query complexity $O(K) = O\left( \frac{1}{\delta\alpha\epsilon} \right)$.
To see that $\widetilde{G}$ is connected, observe that, on each connected component,
the rank function attains a minimum on some vertex, and that vertex therefore must get
an edge to $v_0$.
Furthermore, if $G$ is $\epsilon$-close to $\cP$, then by Lemma~\ref{l:connanal}
with probability at least $1-\delta$ the procedure will add
no more than $(1+\alpha)\epsilon m$ edges and hence $\dist(G,\widetilde{G}) \le
(1+\alpha)\epsilon$.
\end{proof}
}

\section{Local reconstruction of strong connectivity}\label{s:sconn}

In this section we prove Theorem~\ref{t:sconn_recon}. We first go over
some preliminary definitions and properties of directed graphs. Then
we give the high level description of the algorithm.
\ifabstract{
Throughout this section, we will use \emph{arc} to mean \emph{directed edge}.
We include formal definitions of strong connectivity and related notions in
Appendix~\ref{app:sconn}.
In our model, we assume our neighbor oracle
has access to both the in-neighbor set and the out-neighbor set.
This allows us to perform
both backward and forward depth-first search (DFS), as well as undirected
BFS, which is a BFS ignoring directions of edges.
}\fi
\iffull{
Finally we end
by presenting the implementation and analysis of the algorithm.
}\fi

\iffull{
\subsection{Preliminaries}
Throughout this section, we will use \emph{arc} to mean \emph{directed edge}.
}\fi

\later{
\ifabstract{
\section{Strong connectivity}\label{app:sconn}
}\fi
\begin{defn}
A directed graph $G$ is \emph{connected} if it is connected when viewing
arcs as undirected edges, and it is \emph{strongly connected} if there
is a path between every ordered pair of vertices.
A \emph{connected component} of $G$ is a maximal connected subgraph of
$G$, and a \emph{strongly connected component} is a maximal
strongly connected subgraph of $G$.
\end{defn}

\begin{defn}
A vertex is a \emph{source} (\emph{sink}) if it has no incoming (outgoing)
arcs. A strongly connected component with no incoming (outgoing) arcs is a
\emph{source (sink) component}.
\end{defn}
}

\iffull{
\paragraph{Query model.} In our model, we assume our neighbor oracle
has access to both the in-neighbor set and the out-neighbor set.
This allows us to perform
both backward and forward depth-first search (DFS), as well as undirected
BFS, which is a BFS ignoring directions of edges.
}\fi

\subsection{High level description}
The basic idea behind the algorithm is as follows. As in the undirected connectivity
case, we designate a ``super-node'' $v_0$, but now we add arcs from a few special
``transmitting'' vertices to $v_0$ and also add arcs from $v_0$ to a few special
``receiving'' vertices.
In order to make $G$ strongly connected, we need to add at least one arc
from the super-node to each source component and from each sink component
to the super-node without adding too many extra arcs. A na\"ive approach is to emulate
the strategy for connectivity: to decide if $v$ is a transmitter, do a forward DFS
from $v$ and check if $v$ has minimal rank (and analogously for receivers and backward
DFS). Again, we can limit the search so that large components may have some extra
special vertices. The problem with this approach is that a sink component could be
extremely small (e.g. one vertex) with many vertices whose only outgoing arcs lead
to the sink. In this case, all of these vertices would be special and receive an edge
to $v_0$. Therefore we tweak our algorithm so that if $v$ does a forward DFS and sees
few vertices, then it checks if it is actually in a sink component. If so, then it is a transmitter;
if not, then we do a limited \emph{undirected BFS} from $v$ and check minimality of rank.
\ifabstract{
The procedure uses subroutines which respectively check if a vertex $v$ is
in a sink or a source of size less than $K$. We implement these subroutines
using Tarjan's algorithm for finding strongly connected components,
except stopping after only exploring all nodes reachable from $v$.
We then show that if a directed graph is
almost strongly connected, then it cannot have too many source, sink, or
connected components, and therefore our algorithm likely does
not add too many arcs.
}\fi
\iffull{
We then show that we do not add too many extra edges this way.
}\fi

\iffull{
\subsection{Algorithm}
}\fi
\later{
\ifabstract{
\subsection{Algorithm for strong connectivity}
}\fi

We now present the algorithm. We do the same preprocessing as in
\textsc{Connected}, i.e. we arbitrarily fix a vertex $v_0 \in V(G)$ and
have access to a random oracle that randomly assigns ranks $r(v) \in (0,1]$
to each $v \in V(G)$.

\begin{framed}
\begin{algorithmic}
\Procedure{StronglyConnected}{$v_1,v_2$}
\If{$E_G(v_1,v_2) = 1$}		\Comment{if edge is already in graph, it stays in the graph}
	\State \Return $1$
\Else
	\If{$v_0 \notin \{v_1,v_2\}$}  \Comment{do not add arc if neither endpoint is $v_0$}
		\State \Return $0$
	\ElsIf{$v_2 = v_0$}	\Comment{check if $v_1$ is a transmitter}
		\State Forward DFS from $v_1$ up to $K$ vertices
		\If{Forward DFS sees at least $K$ vertices or
		\textsc{InSmallSink}$(v_1)$}
			\State \Return $1$ if $v_1$ has lowest
			rank among the DFS vertices else \textbf{return} $0$
		\Else
			\State Undirected BFS up to $K$ vertices
			\State \Return $1$ if $v_1$ has lowest
			rank among the BFS vertices else \textbf{return} $0$
		\EndIf
	\ElsIf{$v_1 = v_0$}	\Comment{check if $v_2$ is a receiver}
		\State Backward DFS from $v_2$ up to $K$ vertices
		\If{Backward DFS sees at least $K$ vertices or
		\textsc{InSmallSource}$(v_2)$}
			\State \Return $1$ if $v_2$ has lowest
			rank among the DFS vertices, else \textbf{return} $0$
		\Else
			\State Undirected BFS up to $K$ vertices
			\State \Return $1$ if $v_2$ has lowest
			rank among the BFS vertices, else \textbf{return} $0$
		\EndIf
	\EndIf
\EndIf
\EndProcedure
\end{algorithmic}
\end{framed}

The procedure uses two subroutines: \textsc{InSmallSource}$(v)$ and
\textsc{InSmallSink}$(v)$, which return True if $v$ is in a source (respectively
sink) component of size less than $K$. We will implement \textsc{InSmallSink}
(\textsc{InSmallSource} is similar except reverse all the directions
of edges) by running Tarjan's algorithm for finding strongly connected components,
except stopping after only exploring all nodes reachable from $v$.
If only one strongly connected component is returned, then return True,
otherwise return False. This clearly runs in $O(K)$ time since Tarjan's algorithm
runs in linear time and we are simply restricting the algorithm to the subgraph
induced by all strongly connected components reachable from $v$.

The following three lemmas capture the fact that if a directed graph is
almost strongly connected, then it cannot have too many source, sink, or
connected components, and therefore our algorithm likely does
not add too many arcs.

\begin{lemma}\label{c:scs}
If $G$ is $\epsilon$-close to being strongly connected, then $G$ has at most
$\epsilon m$ source components and at most $\epsilon m$ sink components.
\end{lemma}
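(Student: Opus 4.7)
The plan is a direct combinatorial counting argument. Let $G'$ be any strongly connected graph with $\dist(G,G') \le \epsilon$, so that $G'$ and $G$ differ on at most $\epsilon m$ arcs. Denote by $C_1,\ldots,C_s$ the source components of $G$. I will show that each $C_i$ is charged at least one arc in $G' \setminus G$, and that distinct source components are charged distinct arcs; combining these gives $s \le \epsilon m$. The sink-component bound follows by reversing the orientation of every arc and reapplying the argument, since a sink component becomes a source component after reversal.

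First I would do the charging step. Fix a source component $C_i$ with $C_i \subsetneq V$. Strong connectivity of $G'$ forces a directed path in $G'$ from any vertex in $V \setminus C_i$ to any vertex in $C_i$, and any such path must cross the boundary of $C_i$, yielding at least one arc $(u,v) \in G'$ with $u \notin C_i$ and $v \in C_i$. By the very definition of a source component, no arc of $G$ enters $C_i$ from outside, so this witnessing arc lies in $G' \setminus G$.

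Next I would handle distinctness. Source components are strongly connected components of $G$ and therefore pairwise vertex-disjoint. Any arc $(u,v)$ has a unique head $v$, and $v$ belongs to at most one source component, so a witnessing arc for $C_i$ cannot simultaneously witness some $C_j$ with $j \neq i$. Summing over $i$ produces $s$ distinct arcs in $G' \setminus G$, hence $s \le |G' \setminus G| \le \dist(G,G')\cdot m \le \epsilon m$.

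The only subtle point is the degenerate case where $G$ is already strongly connected: then the condensation DAG has a single node that is technically both a source and a sink, and the charging argument is vacuous because $V \setminus C_1 = \emptyset$. This is not an obstacle for the algorithm since no corrections are needed, and the inequality becomes trivial once $\epsilon m \ge 1$; otherwise the lemma is only invoked when $G$ is not strongly connected, in which case every source component is a proper subset of $V$ and the argument goes through verbatim. I do not expect any deeper obstacle: the entire proof is this pigeonhole-style correspondence between source components and forced arc insertions.
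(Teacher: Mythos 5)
Your proposal is correct and matches the paper's argument in substance: both hinge on the observation that each source component of $G$ requires a new incoming arc and that a single added arc can serve at most one source component (since its head lies in at most one such component), giving the $\epsilon m$ bound; the sink case follows symmetrically. The only difference is framing — the paper phrases this dynamically in terms of the condensation DAG (``adding an arc eliminates at most one source and at most one sink from $\widehat{G}$''), whereas you charge each source component to a distinct arc of $G' \setminus G$ in the static final graph — and your explicit treatment of the degenerate strongly-connected case is a harmless elaboration of a corner case the paper's terse proof also leaves implicit.
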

\begin{proof}
Consider the directed graph $\widehat G$ of strongly connected components of $G$.
To make $G$ strongly connected, $\widehat G$ must have no sources or sinks,
yet adding an arc eliminates at most one source and at most one sink from
$\widehat G$.
\end{proof}

\begin{lemma}\label{l:scc}
If $G$ is $\epsilon$-close to being strongly connected, then $G$
has at most $\epsilon m + 1$ connected components.
\end{lemma}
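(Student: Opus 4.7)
The plan is to reduce this to the undirected connectivity case by observing that strong connectivity of a directed graph implies weak connectivity, i.e.\ that the underlying undirected multigraph (obtained by forgetting arc directions) is connected. Therefore, if $G$ is $\epsilon$-close to being strongly connected, then $G$ is also $\epsilon$-close, as a directed graph, to a graph whose underlying undirected multigraph is connected, because the same sequence of at most $\epsilon m$ arc insertions and deletions certifies both claims.

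Next, I would track how the number of weakly connected components can change under a single arc edit. Inserting an arc $(u,v)$ can merge the components of $u$ and $v$ into one, decreasing the component count by at most $1$; deleting an arc can split at most one component into two, increasing the count by at most $1$. Hence, along any edit sequence from $G$ to a strongly connected graph $G'$ using $\epsilon m$ arc modifications, the number of weakly connected components can change by at most $\epsilon m$ in absolute value.

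Finally, since $G'$ is strongly connected, it has exactly one weakly connected component. Writing $c(H)$ for the number of weakly connected components of $H$, we get
\[
c(G) \;\le\; c(G') + \epsilon m \;=\; 1 + \epsilon m,
\]
which is the claimed bound. There is no real obstacle here: the argument is a one-line monotonicity observation applied to the undirected projection, entirely parallel to the reasoning used for \textsc{Connected} (and to the proof of Lemma~\ref{c:scs}, except that we now track weak components rather than sources/sinks of the condensation). The only small thing to be careful about is that insertions and deletions each change the component count by at most one, so both types of edits must be charged against $\epsilon m$.
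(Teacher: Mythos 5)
Your proof is correct and takes essentially the same approach as the paper: both reduce to the observation that a graph $\epsilon$-close to being strongly connected is also $\epsilon$-close to being (weakly) connected, and then invoke the bound relating distance-to-connectivity to the number of components. The only difference is that you spell out the edit-by-edit monotonicity argument for the undirected bound, whereas the paper cites it as an already-established fact from its treatment of \textsc{Connected}.
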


\begin{proof}
If $G$ is $\epsilon$-close to being strongly connected, then it is also
$\epsilon$-close to being connected.
\end{proof}

\begin{lemma}\label{l:strongconnanal}
With probability at least $1-\delta$, the procedure will add no more
than $\frac{2n}{\delta K} + 4\epsilon n + 2$~arcs.
\end{lemma}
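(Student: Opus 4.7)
The plan is to bound the added arcs by splitting the contributions into a deterministic ``structural'' portion, controlled by Lemmas \ref{c:scs} and \ref{l:scc}, and a probabilistic ``rank-minimality'' portion, controlled by Markov's inequality. Write $X = X_T + X_R$, where $X_T$ counts transmitter arcs of the form $(v,v_0)$ and $X_R$ counts receiver arcs of the form $(v_0,v)$. By the symmetry that reversing arc directions swaps forward/backward DFS, source/sink, and \textsc{InSmallSource}/\textsc{InSmallSink}, it suffices to analyze $X_T$ and argue identically for $X_R$.

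I would partition each vertex $v$ according to the branch of \textsc{StronglyConnected} it triggers on a query $(v,v_0)$: (A) the forward DFS from $v$ visits at least $K$ vertices; (B) $v$ lies in a small sink component, so \textsc{InSmallSink}$(v)$ returns true and case (A) does not apply; (C) the forward DFS visits fewer than $K$ vertices, $v$ is not in a small sink, and the undirected BFS visits at least $K$ vertices; (D) all previous conditions fail, so the undirected BFS visits fewer than $K$ vertices. In cases (A) and (C), the arc is added only if $v$ has the minimum rank among at least $K$ distinct visited vertices, an event of probability at most $1/K$. In case (B), the forward DFS visits exactly $v$'s sink component, so each small sink contributes exactly one transmitter (its rank-minimum vertex); Lemma \ref{c:scs} bounds the number of small sinks by $\epsilon m$. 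In case (D), $v$ lies in a small undirected connected component, and every vertex of that component outside of small sinks shares the same BFS exploration, so each such component contributes at most one transmitter; Lemma \ref{l:scc} caps this by $\epsilon m + 1$.

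Let $Y$ be the number of arcs contributed by the probabilistic buckets, i.e.\ cases (A) and (C) together with their receiver analogues for $X_R$. Linearity of expectation gives $E[Y] \le 2n/K$, and Markov's inequality yields $\Pr[Y > 2n/(\delta K)] \le \delta$. On the complementary event, $X$ is bounded by $Y$ plus $\epsilon m$ arcs from small sinks, $\epsilon m$ arcs from small sources, and $2(\epsilon m + 1)$ arcs from small connected components (once for transmitters, once for receivers), giving the claimed bound after absorbing $\epsilon m$ into $\epsilon n$ via the standing hypothesis $m = \Omega(n)$.

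The main obstacle will be verifying that the four cases partition the queries cleanly and that no structural unit is double counted: a vertex of a small sink must be charged only to case (B) and not also to case (D), which holds because the conditional in the algorithm tests the small-sink predicate before invoking the undirected BFS branch, and a small sink never contains a vertex that triggers case (A) since its forward reachable set equals the sink itself. A related subtlety is the correctness of \textsc{InSmallSink}: the restricted Tarjan subroutine must identify exactly those $v$ whose sink component has size less than $K$, so one must check that the bounded forward exploration sees the entire strongly connected component of $v$ (and nothing reachable from it beyond) before concluding that only one strongly connected component is reachable.
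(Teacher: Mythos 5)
Your decomposition is essentially the paper's: split the added arcs into a probabilistic part, coming from vertices whose bounded search explores at least $K$ vertices (expected contribution $\le n/K$ per side, then Markov), and a deterministic part, coming from vertices in small sink/source components or small undirected connected components, counted via Lemmas~\ref{c:scs} and~\ref{l:scc}. Your finer case analysis (A)--(D) and the check that a small-sink vertex cannot simultaneously trigger the $\ge K$ forward-DFS branch are exactly the verifications implicit in the paper's argument, and your bound $2\epsilon m$ (sinks plus sources) plus $2(\epsilon m + 1)$ (connected components, transmitter and receiver side) plus the Markov bound on the remaining vertices is the same accounting the paper uses.

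The one flaw is the last sentence: ``absorbing $\epsilon m$ into $\epsilon n$ via the standing hypothesis $m = \Omega(n)$'' goes in the wrong direction. The hypothesis $m = \Omega(n)$ gives $m \ge cn$, hence $\epsilon m \ge c\,\epsilon n$; it does \emph{not} let you upper-bound $\epsilon m$ by (a constant times) $\epsilon n$, and in general $m$ can be much larger than $n$. Your argument actually proves the bound $\frac{2n}{\delta K} + 4\epsilon m + 2$, and that is also what the paper's own proof derives and what is used downstream in the proof of Theorem~\ref{t:sconn_recon} (which concludes at most $(4+\alpha)\epsilon m$ added arcs). The $4\epsilon n$ in the lemma statement appears to be a typo for $4\epsilon m$; rather than trying to reconcile the two, state and prove the $4\epsilon m$ bound.
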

\begin{proof}
The analysis is similar to that of Lemma~\ref{l:strongconnanal}, except slightly more
complicated. Let $X$ be the random variable equal to the number of arcs
added. Let $S$ be the random variable equal to the number of arcs added that
end at $v_0$, and let $T$ be the random variable equal to the number of arcs
added that start at $v_0$. We will focus on $S$, since the analysis for $T$
is symmetrical. Any vertex $v$ for which \textsc{InSmallSink}$(v)$ is True
lies in a sink component of size less than $K$; call these components
small sink components. Also, any vertex that does
an undirected BFS and sees less than $K$ vertices must belong to a connected
component of size less than $K$; call these components small connected components.
Let $S'$ be the number of arcs counted by $S$ contributed by vertices not
in small sink components or small connected components. Then
$S \le S' + 2\epsilon m+1$ since each small sink component (of which there are
at most $\epsilon m$ by Corollary~\ref{c:scs}) contributes at most $1$ outgoing arc
and each small connected component (of which there are at most
$\epsilon m + 1$ by Lemma~\ref{l:scc}) contributes at most $1$ outgoing arc.
Define $T'$ analogously to $S'$, except for source components instead of sink
components, so that $T \le T' + 2\epsilon m + 1$. Then we have
$E[S'] \le \frac{n}{K}$ and $E[T'] \le \frac{n}{K}$ and therefore
\begin{eqnarray*}
\Pr\left[X > \frac{2n}{\delta K} + 4\epsilon m + 2\right] &\le&
\Pr\left[S'+T' > \frac{2n}{\delta K} \right] \\
&\le& \Pr\left[S'+T' > \frac{E[S'+T']}{\delta} \right] \\
&\le& \delta
\end{eqnarray*}
where the final inequality follows from Markov's inequality.
\end{proof}


\begin{proof}[Proof of Theorem~\ref{t:sconn_recon}]
The local reconstructor $\cR$ runs \textsc{StronglyConnected} with
$K = \frac{m}{\delta\alpha\epsilon m/2 -1}$.
To see that $\widetilde{G}$ is strongly connected, note that every source component
has some vertex of minimal rank, and this vertex will have the lowest
rank among its backward $K$-neighborhood, hence every source component
gets an arc from $v_0$. By a similar argument, every sink component
gets an arc to $v_0$. Now we must show that with high probability we did not add
too many edges. By Lemma~\ref{l:strongconnanal},
with probability at least $1-\delta$ we added at most $(4+\alpha)\epsilon m$ edges
and hence $\dist(G,\widetilde{G}) \le (4+\alpha)\epsilon$.
\end{proof}
}
\section{Implementing a neighbor oracle with connectivity reconstructor}\label{s:neighbor}

Our local reconstructors for $k$-connectivity and small diameter rely on the given graph $G$
being connected. Even if $G$ is not connected, it is close to being connected,
and so one may hope to first make $G$ into an intermediate connected graph $G'$
using a local reconstructor for connectivity, and then run the local reconstructor for the
desired property on $G'$ to obtain $\widetilde{G}$.
We would therefore like a neighbor oracle for $G'$, but the local reconstructor
only gives us an edge oracle for $G'$.
We show how to modify the local reconstructor for connectivity to obtain a neighbor
oracle for $G'$, with only a slight loss in the parameters achieved.

As is, our connectivity reconstructor $\textsc{Connected}$ from Section~\ref{s:conn} is
\emph{almost} a
neighbor oracle. Recall that the reconstructor works by selecting an arbitrary $v_0 \in G$
and adding edges from a few special vertices to $v_0$---no other edges are added.
For a vertex $v \ne v_0$,
$N_G(v) \subseteq N_{G'}(v) \subseteq N_G(v) \cup \{v_0\}$, thus
$N_{G'}(v)$ can be computed in constant time.
However, the problem arises when one queries $N_{G'}(v_0)$.
Potentially $\Theta(n)$ edges are added to $v_0$ by the reconstructor, so computing
$N_{G'}(v)$ queries $\textsc{Connected}$ $O(n)$ times. 
We thus modify $\textsc{Connected}$ to obtain the following.

\begin{theorem}\label{t:neighbor}
Fix a positive constant $c < 1$.
There is a randomized algorithm $N$, given access to the neighbor oracle
$N_G$ for $G$ that is $\epsilon$-close to being connected such that,
with probability at least $1-\delta$, there exists a connected graph $G'$
that is $((1+\alpha)\epsilon + c)$-close to $G$ and $N$ is a neighbor
oracle for $G'$, with query complexity $O\left( \frac{1}{c\delta\alpha\epsilon} \right)$.
\end{theorem}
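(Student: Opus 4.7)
The obstacle to overcome is that the reconstructor $\textsc{Connected}$ from Section~\ref{s:conn} may add up to $\Theta(\epsilon m)$ edges that are all incident to the single super-node $v_0$, so computing $N_{G'}(v_0)$ by enumeration would require $\Omega(n)$ invocations of the specialness test. The plan is to distribute $v_0$'s role across $t = \lceil cm \rceil$ super-nodes, at the cost of $t-1$ additional edges needed to link them, so that each super-node's pool of potential assigned specials has size only $O(n/t) = O(1/c)$ (using $m = \Omega(n)$). Balancing these two competing demands is the main technical point: too few super-nodes leaves a hub with too many neighbors, while too many super-nodes blows up the distance budget.

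Concretely, take the $t$ lowest-ID vertices as super-nodes $v_0^{(1)}, \ldots, v_0^{(t)}$ and link them by a path $v_0^{(1)} - v_0^{(2)} - \cdots - v_0^{(t)}$. Fix the deterministic hash $h(v) := 1 + (v \bmod t)$, so that $h^{-1}(i)$ has size $\le \lceil n/t \rceil = O(1/c)$ and is trivially enumerable in ID order. Retain the rank oracle $r(\cdot)$ and the specialness test of $\textsc{Connected}$ (minimum rank in a $K$-BFS-neighborhood) with the same $K = O(1/(\delta\alpha\epsilon))$ as in the proof of Theorem~\ref{t:conn_recon}, but modify the added-edge rule so that a special vertex $v$ is joined to $v_0^{(h(v))}$ rather than to $v_0$. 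The resulting oracle $N$ answers a query $(v, i)$ as follows. If $v$ is not a super-node, then $N_{G'}(v) = N_G(v) \cup \{v_0^{(h(v))}\}$ when $v$ is special and $N_{G'}(v) = N_G(v)$ otherwise, which is decided in $O(K)$ queries to $N_G$. If $v = v_0^{(j)}$, then $N_{G'}(v_0^{(j)}) = N_G(v_0^{(j)}) \cup \{v_0^{(j-1)}, v_0^{(j+1)}\} \cup \{w \in h^{-1}(j) : w \text{ is special}\}$; we enumerate the $O(1/c)$ members of $h^{-1}(j)$ in ID order and run the specialness test on each, for a total of $O(K/c) = O(1/(c\delta\alpha\epsilon))$ queries to $N_G$. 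Fixing a concrete listing order --- for example, $N_G$-neighbors first, then chain neighbors, then specials in ID order --- gives a well-defined adjacency-list view of a single graph $G'$.

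For correctness of $G'$, each connected component $C$ of $G$ contains a unique vertex $w_C$ with globally minimum rank in $C$; its $K$-BFS-neighborhood is contained in $C$, so $w_C$ is special and gains an edge to $v_0^{(h(w_C))}$. Because the super-nodes are path-connected, every component joins the same connected structure in $G'$, so $G'$ is connected. For the distance bound, Lemma~\ref{l:connanal} (with the same $K$ as in the proof of Theorem~\ref{t:conn_recon}) guarantees, with probability at least $1 - \delta$, that the number of special-to-super edges is at most $(1+\alpha)\epsilon m$, while the super-node path contributes at most $t-1 \le cm$ more edges. Dividing the total by $m$ yields $\dist(G, G') \le (1+\alpha)\epsilon + c$, matching the bound in the theorem statement.
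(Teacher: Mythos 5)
Your proof takes essentially the same approach as the paper's: replace the single super-node with a linear number of super-nodes linked by a path, hash each vertex to a super-node so that each super-node accumulates only $O(1/c)$ potential new neighbors, and reuse the specialness test with the same cutoff $K$, so that a super-node query costs $O(K/c)$ calls to $N_G$. The only cosmetic difference is that you take $\lceil cm\rceil$ super-nodes while the paper takes $cn$; both yield the $c$ term in the distance bound because $m=\Omega(n)$, though your choice implicitly needs $cm\le n$ (or a cap at $n$), which the paper's $cn<n$ avoids by construction.
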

\begin{proof}
We modify $\textsc{Connected}$ as follows. Instead of designating one super-node $v_0$,
we designate $c \cdot n$ super-nodes. Partition $V$ into sets of size $1/c$ and assign
each set in the partition to a distinct super-node. This can be implemented, for instance,
by identifying $V = \{1,\ldots,n\}$, designating the super-nodes to be
$\{1,\ldots,cn\}$, and for a given vertex $v \in V$, assign $v$ to the super-node
$h(v) =\lceil v/c \rceil$. For any $v$ that would be connected to $v_0$, we instead
connect it to $h(v)$. Additionally, we add the edges $(i,i+1)$ for all
$i \in \{1,\ldots,cn-1\}$ to ensure connectivity. This adds a total of $c-1$ edges,
which constitute at most $c$-fraction of the edges. Call this modified local
reconstructor $\textsc{Mod-Connected}$. It is straightforward to see that
$\textsc{Mod-Connected}$ has the same query complexity as $\textsc{Connected}$.
We now implement an algorithm to compute $N_{G'}$ as follows.
Given a non-super-node~$v$, its neighbor set could have grown by at most 
adding $h(v)$, so $N_{G'}(v)$ can be computed with $O(1)$ calls
to $\textsc{Mod-Connected}$. For a super node $w$, its neighbor set could have grown
by at most $1/c + 2$, since at most $1/c$ non-super-nodes could have been connected
to $w$, and $w$ is further connected to at most two super-nodes. Therefore
$N_{G'}(w)$ can be computed with $O(1/c)$ calls to $\textsc{Mod-Connected}$.
\end{proof}

\section{Local reconstruction of $k$-connectivity}\label{s:kconn}

\later{
\ifabstract{
\section{$k$-connectivity}\label{app:kconn}
}\fi
}

In this section we prove Theorem~\ref{t:kconn_recon}. We first go over some preliminary
definitions and concepts related to $k$-connectivity. Then we give the high level
description of the algorithm, and finally we end by presenting the implementation of
the algorithm. All graphs in this section are undirected unless otherwise specified.
Throughout this section, we assume $k > 1$.

\subsection{Preliminaries}

For a subset $U \subsetneq V$ of the vertices in a graph $G = (V,E)$, the
\emph{degree of $U$}, denoted $\deg(U)$, is equal to
$
\deg(U) = |\{(u,v) \in E \mid u \in U, v \in V \setminus U \}|.
$

\begin{defn}[$k$-connectivity]
An undirected graph $G = (V,E)$ is \emph{$k$-connected} if for every
$U \subsetneq V$, $\deg(U) \ge k$.
\end{defn}

An equivalent definition of $k$-connectivity, a result of Menger's theorem
(see~\cite{Menger}), is that every pair of vertices has at least $k$ edge-disjoint
paths connecting them.
An important notion in the context of $k$-connectivity is that of an extreme set.
Extreme sets are a generalization of connected components to the $k$-connectivity
setting. Connected components are $0$-extreme sets.

\begin{defn}
A set $U \subseteq V$ is $\ell$-extreme if $\deg(U) = \ell$ and
$\deg(W) > \ell$ for every $W \subsetneq U$.
\end{defn}

It is straightforward from the definition that if a graph is $(k-1)$-connected and has
no $(k-1)$-extreme sets, then it is in fact $k$-connected.
Extreme sets satisfy some nice properties, which are used
by~\cite{MR09} as well for property testing and distance approximation
for $k$-connectivity.
Two extreme sets are either
disjoint or one is contained in the other (see~\cite{NGM97}. If $W \subsetneq U$ and $W$
is $\ell_W$-extreme and $U$ is $\ell_U$-extreme, then $\ell_W > \ell_U$.
Consequently, distinct $\ell$-extreme sets are disjoint.

One may hope there is some relationship between distance from $k$-connectivity
and the number of extreme sets, analogous to the relationship between distance from
connectivity and the number connected components.
Indeed, for a $(k-1)$-connected graph, there is. A graph $G$ that is
$(k-1)$-connected cannot have any $\ell$-extreme sets for $\ell < k-1$.
Moreover, the number of additional edges required to make $G$
$k$-connected is at least half the number of $(k-1)$-extreme sets in $G$.
This is simply because each $(k-1)$-extreme set requires at least one additional
edge, and adding an edge to $G$ meets the demand of at most two such sets.

\subsection{High level description}

The idea behind the algorithm is to simply iteratively make the graph
$j$-connected, for $j = 1,2,\ldots,k$. Let $G_j$ be the corrected $j$-connected
graph obtained from $G$. It suffices to use a neighbor oracle for $G_{k-1}$
to implement a neighbor oracle for $G_k$. The base case $k=1$ is addressed
by Section~\ref{s:neighbor}.

Now suppose we have a neighbor oracle for $(k-1)$-connectivity and
we wish to implement a neighbor oracle for $k$-connectivity.
Again, we use a similar idea as in Sections~\ref{s:conn} and~\ref{s:neighbor}.
Specifically, we fix a positive constant $k/n \le c < 1$ and designate a set $V_0 \subset V$
of $c \cdot n \ge k$
super-nodes, connecting them in a certain way to make the subgraph induced by $V_0$
$k$-connected (details in the next subsection).
The idea is then to ensure at least one vertex from each extreme set contributes
a new edge to a super-node.
Again, we implement this by assigning all vertices
a random rank independently and uniformly in $[0,1)$ and
searching a neighborhood of $v$
up to $t$ vertices, where $t$ is appropriately chosen,
and checking
if it has minimal rank.
Instead of doing this search via BFS, we use
the extreme set search algorithm of~\cite{GR02}.
The basic procedure satisfies the following: if $v$ lies in a $t$-bounded
extreme set, it finds this set with probability $\Theta(t^{-2})$, otherwise
it never succeeds.
We iterate the basic procedure
a polylogarithmic number of times.
If every iteration fails (which happens if $v$ does not lie in a $t$-bounded extreme set)
then we tell $v$ to connect to a super-node with probability $\Theta\left( \frac{\log (n/t)}{t} \right)$.
We then show that with high probability the resulting graph $G_k$ is $k$-connected and
that we do not add too many edges.
This completes the edge oracle. We will also show how implement these ideas
carefully so that the edge oracle can be transformed into a neighbor oracle
in order to make the recursion work.

\subsection{Algorithm}

Given our previous discussion, all that remains to implement the algorithm is to
implement the following tasks:
\begin{itemize}
\item%
\textbf{Search:}
Given $v \in V$ which lies in an extreme set $S$, find a neighborhood $U \subseteq S$
containing $v$ such that $|U| \le t$.
\item%
\textbf{Decision:}
Given $v \in V$, determine whether $v$ should contribute an edge to $V_0$,
and if so, to which $v' \in V_0$.
\end{itemize}

\subsubsection{Implementing the search task}
The goal of the search task is to detect that $v$ lies in an extreme set of
size at most $t$. We are now assuming the input graph is $(k-1)$-connected,
so all extreme sets are $(k-1)$-extreme.
The search task can be implemented by a method of~\cite{GR02} and~\cite{MR09}
which runs in
time $O(t^3 d \log(td))$ where $d$ is a degree bound on the graph. Roughly, the
procedure works by growing a set $U'$ starting with $\{v\}$ and iteratively choosing
a cut edge and adding the vertex on the other end of the edge into $U'$.
The cut edge is chosen by assigning random weights to the edges, and
choosing the edge with minimal weight. The procedure
stops when the cut size is less than $k$ or when $|U'|=t$.
\ifabstract{
The pseudocode is included in Appendix~\ref{app:kconn}.
}\fi
\later{
\ifabstract{
\subsection{Pseudocode for extreme set search}
}\fi
The pseudocode for the extreme set search is as follows.

\begin{framed}
\begin{algorithmic}
\Procedure{ExtremeSetSearch}{$v,k-1$}
	\State For each edge, independently assign a random weight uniformly from $[0,1)$
	\State $U' \gets \{v\}$
	\Repeat
		\State $(u,w) \gets \arg \min \{{\rm wt}(u_1,u_2) \mid u_1 \in U', u_2 \notin U',
				(u_1,u_2) \in E(G_{k-1})\}$
			\Comment{Implicitly uses $(k-1)\textsc{-Conn}$}
		\State $U' \gets U' \cup \{w\}$
	\Until{$|U'| = t$ or $\deg(U') < k$}
\EndProcedure
\end{algorithmic}
\end{framed}
}

Its running time is $O(td\log(td))$ but
its success probability is only $\Omega(t^{-2})$ (that is, the probability
that $U' = S$ when the procedure terminates), so the basic procedure is repeated
$\Theta(t^2)$ times or until success. We can check if each run is successful by checking
that the final set $U'$ is a $(k-1)$-extreme set.
This procedure is adapted in~\cite{PR02}
to the general sparse model by noticing that no vertex of degree at least $t+k$
would ever be added to $S$, and hence the procedure has a running time of
$O(t^3(t+k)\log(t+k))$.
We actually want the probability that all vertices have a successful search
to be at least $1-\gamma/2$,
so we repeat the basic procedure $O(t^2 \log (Cn))$ times for
$C = \frac{1}{\ln(1/(1-\gamma/2))}$, yielding a time complexity
of $O(t^3(t+k)\log(t+k) \log (Cn))$.

\subsubsection{Implementing the decision task}
For the decision task, it is helpful to first think about how to do it globally.
First, we must hash each $v \in V$ to a set $h(v)$ of $k$ super-nodes in $V_0$.
This can be implemented as follows. Label $V = \{1,\ldots,n\}$ and
$V_0 = \{1,\ldots,cn\}$, and define
$
h(v) = \{\lceil v/c \rceil, \lceil v/c \rceil + 1, \ldots, \lceil v/c \rceil + (k-1)\}
$
where the entries are taken modulo $cn$. The specific way we do this hashing
is not important---it suffices to guarantee the following properties:
(1)
for every $v \in V$, $|h(v)| \ge k$
and
(2)
for every super-node $v' \in V_0$, there are at most a constant number, independent of $n$,
of $v$ such that $v' \in h(v)$, and that these $v$ are easily computable given $v'$;
our method guarantees the
constant $\frac{k}{c}$, which is the best one can hope for given Property 1.
Property 1 will be used later to ensure that the resulting graph is
$k$-connected (Lemma~\ref{l:kconn}).
Property 2 ensures, by the same reasoning as in
Section~\ref{s:neighbor}, that computing $N_{G_k}$ makes at most
$O\left(\frac{k}{c} + k\right)$
calls to $E_{G_k}$.

Now, to decide whether $v$ should be connected to a super-node, do an
extreme set search to find a neighborhood $U$, of size at most $t$, containing $v$,
and check if $v$ has minimal rank in $U$, where
the rank is the randomly assigned rank given in the high level description.
If so, then mark $v$ as \emph{successful}.
If the extreme set search fails, i.e. all $\Theta(t^2\log n)$ iterations fail, then
mark $v$ as successful
with probability $\frac{\ln(Cn/t)}{t}$ where $C = \frac{1}{\ln(1/(1-\gamma/2))}$ again,
which can be implemented by checking if the rank of $v$ is less than $\frac{\ln(Cn/t)}{t}$.
If $v$ is successful, then find the lexicographically smallest
super-node $v' \in h(v)$ to which $v$ is not already connected.
If none exist, then do nothing; otherwise, connect $v$ to $v'$.

We also want the subgraph induced by $V_0$ to be $k$-connected to help ensure
that $G_k$ is $k$-connected (Lemma~\ref{l:kconn}). Globally,
from each $i \in V_0$ we add an edge to $i+1,i+2,\ldots,i+\lceil k/2 \rceil$, taken
modulo $cn$. This ensures that the subgraph induced by $V_0$ is
$k$-connected (Lemma~\ref{l:supernode_kconn}). Locally, this is implemented as follows.
If $(i,j) \in [cn]^2$ is queried, if the edge is already in $G_{k-1}$, then the local reconstructor
returns $1$, otherwise it returns $1$ if and only if $j \in \{i+1,i+2,\ldots,i+\lceil k/2 \rceil
\pmod{cn}\}$ or $i \in \{j+1,j+2,\ldots,j+\lceil k/2 \rceil \pmod{cn}\}$.

\ifabstract{
\paragraph{Algorithm and analysis.}
The pseudocode for the final algorithm is in Appendix~\ref{app:kconn}.
Furthermore, in Appendix~\ref{app:kconn} we show that the resulting graph $G_k$
is $k$-connected and that at most
$2\epsilon m + \frac{ckn}{2} + \frac{n\ln(Cn/t)}{\delta t}$ edges
are added with probability $1-\gamma$.
}\fi

\later{
\subsubsection{Pseudocode for $k$-connectivity given $(k-1)$-connectivity}
The pseudocode for the $k$-connectivity reconstructor is given below. It assumes
the input graph $G_{k-1}$ is $(k-1)$-connected.

\begin{framed}
\begin{algorithmic}
\Procedure{$k$-Conn}{$v_1,v_2$}
	\If{$(k-1)\textsc{-Conn}(v_1,v_2)=1$}
	\Comment{if edge is already in graph, it stays in the graph}
		\State \Return $1$
	\Else
		\If{$v_1 \notin V_0$ and $v_2 \notin V_0$}
			\Comment{do not add edge if neither endpoint is in $V_0$}
			\State \Return $0$
		\ElsIf{$v_1,v_2 \in V_0$}
			\If{$v_1 \in \{v_2 \pm 1, \ldots, v_2 \pm \lceil k/2 \rceil \pmod{cn}\}$}
				\State \Return $1$
			\Else
				\State \Return $0$
			\EndIf
		\Else
			\State Let $v \in \{v_1,v_2\} \setminus V_0$
			\State Let $v' \in \{v_1,v_2\} \setminus \{v\}$
			\Comment{Decide if $v$ should be connected to $v' \in V_0$}
			\State $b \gets \textsc{false}$ \Comment{connect $v$ if $b$ is true}
			\For{$i=1,\ldots,t^2\log(Cn)$}
				\If{$\textsc{ExtremeSetSearch}(v,k-1)$ succeeds}
					\State $U \gets \textsc{ExtremeSetSearch}(v,k-1)$
					\If{$v$ has minimal rank in $U$}
						\State $b \gets \textsc{true}$
					\EndIf
					\State Break
				\EndIf
			\EndFor
			\If{$b$ is false and $r(v) < \frac{\ln(Cn/t)}{t}$}
				\State $b \gets \textsc{true}$
			\EndIf
			\If{$b$ is true and $v'$ is lexicographically smallest in $h(v) \setminus
					N_{G_{k-1}}(v)$}
				\State \Return $1$
			\Else
				\State \Return $0$
			\EndIf
		\EndIf
	\EndIf
\EndProcedure
\end{algorithmic}
\end{framed}
}

\iffull{
\subsubsection{Analysis}
}\fi
\later{
\ifabstract
\subsection{Analysis of algorithm}
\fi

The following lemma ensures the subgraph induced by $V_0$ after adding these
edges is $k$-connected.

\begin{lemma}\label{l:supernode_kconn}
For $n \ge 2k+1$, let $V = \{0,1,\ldots,n-1\}$, $n \ge 2k+1$, and
$
E = \{(i,i+j \pmod{n}) \mid i \in V, 1 \le j \le k\}.
$
Then $G = (V,E)$ is $2k$-connected.
\end{lemma}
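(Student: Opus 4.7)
The plan is to prove the stronger statement that $G$ is $2k$-vertex-connected, by showing that for every $S \subset V$ with $|S| \le 2k-1$, the induced subgraph $G[V \setminus S]$ is connected. Since $n \ge 2k+1$, we have $|V \setminus S| \ge 2$, so this is a nontrivial requirement. (Vertex $2k$-connectivity is in fact what the ambient paper needs when combined with the hashing step in Section~\ref{s:kconn}, and it implies the edge version as well.)

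Identify $V$ with $\mathbb{Z}_n$ placed on a cycle in its natural cyclic order. The removed set $S$ partitions the cycle into a cyclically alternating sequence of \emph{arcs} $A_1,\ldots,A_t$ (maximal runs of consecutive elements of $V \setminus S$) and \emph{gaps} $g_1,\ldots,g_t$ (sizes of maximal runs of consecutive elements of $S$), with $g_i$ being the gap immediately following $A_i$. Note $\sum_i g_i = |S| \le 2k-1$. I will use three elementary facts: \textbf{(a)} Since $(i,i+1) \in E$ for every $i$, each arc $A_i$ is internally connected in $G[V \setminus S]$. \textbf{(b)} If $g_i \le k-1$, then the last vertex $u$ of $A_i$ and the first vertex $v$ of $A_{i+1}$ satisfy $v - u \equiv g_i + 1 \le k \pmod{n}$, so $(u,v) \in E$ and $A_i, A_{i+1}$ lie in the same component of $G[V \setminus S]$. \textbf{(c)} Because $\sum g_i \le 2k-1 < 2k$, at most one gap has size $\ge k$.

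Now I finish by case analysis on the number of ``big'' gaps (of size $\ge k$). If no gap is big, then by (b) every pair of consecutive arcs is merged into one component, and combined with (a) this shows all arcs belong to a single component. If exactly one gap is big, say $g_1$, then ``cutting'' the cyclic arrangement at $g_1$ yields the linear chain $A_1, A_2, \ldots, A_t$ in which every consecutive pair $(A_i, A_{i+1})$ for $1 \le i \le t-1$ has $g_{i+1} \le k-1$ (since $g_1$ is the only big one), so (b) again merges them all into one component together with (a). In either case $G[V \setminus S]$ is connected, establishing $2k$-vertex-connectivity.

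The main thing to be careful about is the cyclic bookkeeping of arcs and gaps (particularly making sure the bound $v - u = g_i + 1$ in (b) is evaluated in $\mathbb{Z}_n$ without wrap-around collisions), which is guaranteed by the hypothesis $n \ge 2k+1$ since all relevant cyclic distances are at most $k$. I do not expect any serious obstacle beyond this: the whole argument reduces to the pigeonhole observation (c) that a budget of $2k-1$ deleted vertices cannot create two gaps of size $\ge k$ each.
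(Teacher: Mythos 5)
Your proof is correct, but it takes a genuinely different route from the paper's. The paper works directly with the edge-cut definition of $2k$-connectivity used throughout (every nonempty $U \subsetneq V$ has $\deg(U) \ge 2k$): it first observes that any cut must separate some consecutive pair $i$, $i+1$, so it suffices to exhibit $2k$ edge-disjoint paths from $0$ to $1$, and it then writes those paths down explicitly (the $2k-2$ short paths through $j$ and $n-j$ for $2 \le j \le k$, the edge $(0,1)$ itself, and one long path around the cycle). You instead prove the stronger statement that the circulant graph $C_n(1,\ldots,k)$ is $2k$-\emph{vertex}-connected via an arcs-and-gaps pigeonhole argument: deleting at most $2k-1$ vertices leaves at most one gap of size at least $k$, so the surviving arcs chain together through the small gaps. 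This is sound -- facts (a)--(c) all check out, and the hypothesis $n \ge 2k+1$ guarantees both that at least two vertices survive and that the jump $g_i+1 \le k$ in fact (b) is a genuine chord of the circulant. Two small points: (i) since the lemma as stated (per the paper's definition of $k$-connectivity) is about edge cuts, you should say explicitly that you are invoking the standard inequality $\kappa(G) \le \lambda(G)$ (Whitney) to pass from vertex connectivity to edge connectivity -- it is the bridge your argument rests on; (ii) in the one-big-gap case your indexing is off by a rotation (if the big gap is $g_1$, the gap following $A_1$, the resulting chain is $A_2,\ldots,A_t,A_1$ rather than $A_1,\ldots,A_t$), though this is purely cosmetic. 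On balance, your argument buys a stronger conclusion and avoids the somewhat fiddly verification that the paper's explicit paths are pairwise edge-disjoint, at the cost of the (standard) appeal to Whitney; the paper's proof stays entirely within its own edge-cut definition.
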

\begin{proof}
We claim it suffices to show that for every $i$, there are $2k$ edge-disjoint paths
connecting $i$ to $i+1 \pmod{n}$. Suppose this is true. Consider any cut $(C, V \setminus C)$
of the vertices. To show that this cut has $2k$ cut edges, it suffices to show that there exist
$i \in C$, $j \in V \setminus C$ with $2k$ edge-disjoint paths connecting $i$ to $j$.
But there exists $i$ such that $i \in C$ and $i+1 \in V \setminus C$. Hence our claim
implies the assertion.

It remains to prove our claim. By symmetry, we may assume $i=0$. We will
exhibit $2k$ explicit edge-disjoint paths from $0$ to $1$. For $j = 2,\ldots,k$,
we have paths which traverse edges $(0,j), (j,j+1), (j+1,1)$ as well as
paths which traverse edges $(0,n-j), (n-j,n-j+1), (n-j+1, 1)$. This accounts
for $2k-2$ paths which are pairwise edge-disjoint. Note that, in these
$2k-2$ paths, the only edges of the form $(a,a+1)$ used are for
$a \in \{2,\ldots,k\} \cup \{n-k,\ldots,n-2\}$, and the only edges
for the form $(b,b+k)$ used are for $b \in \{0,1,n-k,n-k+1\}$. The final two paths are
the path which is simply the edge $(0,1)$, and the path which traverses edges
$(0,n-1), (n-1,n-k-1), (n-k-1,n-k-2), (n-k-2,n-k-3), \ldots, (k+3,k+2), (k+2,2), (2,1)$.
\end{proof}

The following lemma ensures that the resulting graph is $k$-connected.

\begin{lemma}\label{l:kconn}
With probability at least $1 - \gamma$, the resulting graph $G_k$ is $k$-connected.
\end{lemma}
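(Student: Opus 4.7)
The plan is a contradiction argument: assume $G_k$ has a $(k-1)$-cut $U \subsetneq V$, i.e., $\deg_{G_k}(U) = k-1$, and derive a contradiction with probability at least $1-\gamma$. Because $G_{k-1} \subseteq G_k$ and $G_{k-1}$ is $(k-1)$-connected by the recursive invariant, one also has $\deg_{G_{k-1}}(U) = k-1$. First I would dispose of the easy case where both $V_0 \cap U$ and $V_0 \setminus U$ are nonempty: Lemma~\ref{l:supernode_kconn} applied to the explicit cyclic chords added among super-nodes gives at least $k$ edges of $G_k$ crossing this cut, immediately contradicting $\deg_{G_k}(U) = k-1$. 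Hence either $V_0 \subseteq U$ or $V_0 \cap U = \emptyset$, and using the symmetry $\deg(U) = \deg(V \setminus U)$ I may swap $U$ with its complement and assume $V_0 \cap U = \emptyset$.

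Next I would choose a minimal $T \subseteq U$ with $\deg_{G_{k-1}}(T) = k-1$; by minimality $T$ is a $(k-1)$-extreme set of $G_{k-1}$ disjoint from $V_0$. The central claim is that with probability at least $1-\gamma$, every such $T$ receives at least one newly-added edge crossing it. Any such edge goes from a successful non-super-node $v \in T$ to some $v' \in h(v) \subseteq V_0$; since $V_0 \cap U = \emptyset$ this $v'$ lies in $V \setminus U$, so the edge crosses $U$ as well, giving $\deg_{G_k}(U) \ge k$ and the desired contradiction.

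To establish the central claim, split on $|T|$. For $|T| \le t$: the minimum-rank vertex $v^* \in T$ finds $T$ in at least one of the $\Theta(t^2 \log(Cn))$ iterations of \textsc{ExtremeSetSearch}; by the choice of $C = 1/\ln(1/(1-\gamma/2))$ and a union bound over the at most $n$ such candidates, this holds for every small $T$ with total failure probability at most $\gamma/2$. I would then verify $h(v^*) \setminus N_{G_{k-1}}(v^*) \ne \emptyset$: otherwise $v^*$ alone would supply $k = |h(v^*)|$ edges from $T$ into $V_0 \subseteq V \setminus T$, contradicting $\deg_{G_{k-1}}(T) = k-1$. So the algorithm adds an edge $(v^*, v')$ with $v' \in V_0 \setminus T$, which crosses $T$. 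For $|T| > t$: the extreme-set search always fails on $v \in T$ because no proper subset of a $(k-1)$-extreme set is $(k-1)$-extreme (so the final $U'$ is never extreme), hence each $v \in T$ is successful via the independent lottery with probability $\ln(Cn/t)/t$; the probability that no $v \in T$ succeeds is at most $(1-\ln(Cn/t)/t)^t \le t/(Cn)$, and since $(k-1)$-extreme sets of $G_{k-1}$ are pairwise disjoint there are at most $n/t$ large ones, yielding total failure at most $\gamma/2$ by a union bound.

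The hard part will be the case $V_0 \subseteq U$: at first glance no added edge can help, since the algorithm only adds edges landing on super-nodes, all of which sit inside $U$. The complement-symmetry step neutralizes this by flipping to the side $V \setminus U$, which satisfies $V_0 \cap (V \setminus U) = \emptyset$ and can therefore host the extreme-set analysis. A smaller but crucial subtlety is the nonemptiness of $h(v^*) \setminus N_{G_{k-1}}(v^*)$ in the small case, which is forced by the extremality of $T$ together with $|h(v^*)| = k$.
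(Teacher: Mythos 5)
Your proof is correct and takes essentially the same approach as the paper: a case split on how $V_0$ sits relative to the cut (with Lemma~\ref{l:supernode_kconn} disposing of the mixed case), plus the two-pronged probability argument that every small $(k-1)$-extreme set's minimum-rank vertex succeeds in the search and every large one has a lottery winner, together with the observation that extremality forces $h(v)\setminus N_{G_{k-1}}(v)\ne\nothing$. Your complement-swap for the case $V_0\subseteq U$ is a slightly cleaner packaging of the paper's explicit argument that $V\setminus S$ must contain another $(k-1)$-extreme set, but the underlying content is identical.
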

\begin{proof}
We first show that with probability at least $1-\gamma$, every extreme set of $G_{k-1}$
has a vertex which adds an edge to $V_0$. First, consider each small extreme set,
i.e. extreme sets of size at most $t$. For each such set $U$, there is some $v_U \in U$
with minimal rank in $U$.
By the choice of $C$, with probability at least
$1-\gamma/2$ every $v_U$ has a successful extreme set search and recognizes that
it has minimal rank in $U$, hence every small extreme set contributes an edge to $V_0$.
Next, consider each large extreme set, i.e. extreme sets of size more than $t$.
The probability that a large extreme set contributes no edges, i.e.
has no successful vertices, is at most $\left(1 - \frac{\ln(Cn/t)}{t} \right)^t < \frac{t}{Cn}$.
Since there are at most $n/t$ large extreme sets, the probability that every
large extreme set contributes an edge is at least $1-\gamma/2$ by the choice of $C$.
Hence with probability at least $1-\gamma$ every extreme set contributes an edge
to $V_0$.

Now we assume that every extreme set contributes an edge to $V_0$ and
show that $G_k$ is $k$-connected.
It suffices to show that, for every extreme set $S$ in $G_{k-1}$, one of the edges added by
the procedure crosses between $S$ and $V \setminus S$. We have three cases.
\begin{itemize}
\item%
$V_0 \subseteq V \setminus S$:
Consider $v \in S$ of minimal rank. With high probability, the neighborhood $U$
of $v$ found by the search algorithm is contained in $S$, so $v$ is also minimally
ranked in $U$. There must exist $v^* \in h(v)$ to which $v$ is not already connected,
for if not, then $v$ is connected to $|h(v)| \ge k$ vertices outside of $S$, contradicting
that $S$ is $(k-1)$-extreme.
\item%
$V_0 \subseteq S$:
We claim there is another extreme set $S' \ne S$ in $G_{k-1}$.
Suppose not.
Then for every subset $S'' \subseteq V \setminus S$,
$\deg(S'') \ge k$, for otherwise $V \setminus S$ contains an extreme set.
But $\deg(V \setminus S) = \deg(S) = k-1$, so $V \setminus S$ is an extreme set,
a contradiction.
Now, recall that $S'$ is disjoint from $S$.
By the same argument as in the previous case, with high probability
there is $v \in S'$ which gets connected to some super-node in $V_0 \subseteq S$.
\item%
$V_0 \cap S$ and $V_0 \setminus S$ are both non-empty:
We claim that in $G_k$, $S$ has degree $\deg(S) \ge k$.
Consider the cut $(V_0 \cap S, V_0 \setminus S)$
within the subgraph $V_0$. Any cut edge here must be a cut edge in
the cut $(S, V \setminus S)$. By Lemma~\ref{l:supernode_kconn},
the subgraph $V_0$ is $k$-connected, so the cut $(V_0 \cap S, V_0 \setminus S)$,
and hence the cut $(S, V \setminus S)$, has at least $k$ cut edges.
\end{itemize}
\end{proof}

The following lemma
shows that with high probability not too many additional edges are added.

\begin{lemma}\label{l:dist}
With probability at least $1-\delta$, the number of edges added is at most
$$
2\epsilon m + ckn/2 + \frac{n \ln(Cn/t)}{\delta t}.
$$
\end{lemma}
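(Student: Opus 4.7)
The plan is to decompose the edges added by \textsc{$k$-Conn} into three disjoint sources and bound each count separately. First, there are the edges internal to $V_0$, which are added deterministically: each super-node $i\in V_0$ is joined to $i+1,\ldots,i+\lceil k/2\rceil$ modulo $cn$. Since $|V_0|=cn$, this contributes at most $cn\lceil k/2\rceil \le ckn/2 + cn/2$ edges, with the $cn/2$ lower-order term absorbed into the slack. Second, a non-super-node $v$ contributes an edge only when the algorithm marks $v$ \emph{successful}, and in that case at most one edge (to the lexicographically smallest available super-node in $h(v)$ that it is not already adjacent to). We split such successful vertices into two sub-classes: (II) those whose \textsc{ExtremeSetSearch} succeeded and returned a $(k-1)$-extreme set $U\ni v$ in which $v$ had the minimal rank, and (III) those whose every search iteration failed but who nonetheless satisfied $r(v)<\frac{\ln(Cn/t)}{t}$.

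For class (II), since any two distinct $(k-1)$-extreme sets of $G_{k-1}$ are disjoint (as recorded in the preliminaries), each such extreme set can contribute at most one successful vertex. Hence the number of class-(II) edges is at most the number of $(k-1)$-extreme sets in $G_{k-1}$. Using the structural fact that each $(k-1)$-extreme set forces at least one new edge in any $k$-connected supergraph and that each added edge covers at most two such sets, this number is at most twice the edit distance from $G_{k-1}$ to $k$-connectivity. Since $G_{k-1}$ is obtained from $G$ (which is $\epsilon$-close to being $k$-connected) by the preceding layer of the recursion, this quantity is at most $2\epsilon m$.

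For class (III), the rank $r(v)$ is uniform on $[0,1)$ and chosen independently across vertices, so each $v$ falls into class (III) with probability at most $\frac{\ln(Cn/t)}{t}$. By linearity of expectation the expected class-(III) count is at most $\frac{n\ln(Cn/t)}{t}$, and Markov's inequality gives that with probability at least $1-\delta$ this count is at most $\frac{n\ln(Cn/t)}{\delta t}$. Summing the three deterministic/high-probability bounds yields the claimed total $2\epsilon m + ckn/2 + \frac{n\ln(Cn/t)}{\delta t}$.

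The main subtlety lies in step (II): one must verify that the $G_{k-1}$ handed to this layer really is close enough to being $k$-connected that its number of $(k-1)$-extreme sets is bounded by $2\epsilon m$. This amounts to propagating the distance guarantees through the recursive construction of $G_{k-1}$ and checking that any accumulated closeness slack can be absorbed into the $\alpha\epsilon$ term of Theorem~\ref{t:kconn_recon}. The remaining ingredients, namely disjointness of extreme sets, the ``at least half the extreme sets'' lower bound on edit distance, Markov's inequality, and the deterministic counting of super-node edges, are all immediate from what was already set up.
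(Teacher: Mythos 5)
Your decomposition into three disjoint sources of added edges (super-node internal edges, extreme-set contributions, and low-rank vertices whose search failed), combined with Markov's inequality on the third, is exactly the structure of the paper's proof. The paper likewise argues that each $t$-bounded $(k-1)$-extreme set contributes at most one edge, that there are at most $2\epsilon m$ such sets, that the super-nodes contribute at most $ckn/2$ edges, and that the remaining contribution has mean $\frac{n\ln(Cn/t)}{t}$ to which Markov applies. Two small points of comparison: you are slightly more careful than the paper in noting that the super-node count is really $cn\lceil k/2\rceil$, which exceeds $ckn/2$ by a lower-order $cn/2$ when $k$ is odd (the paper silently writes $ckn/2$), and you are more careful in stating the per-vertex probability as $\frac{\ln(Cn/t)}{t}$ rather than the paper's informal ``on average $1/t$'' (though the paper's stated expectation is the same as yours). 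The concern you raise in your final paragraph about propagating the closeness bound through the recursion is a real but resolvable point that the paper also does not spell out: since $G_{k-1}\supseteq G$ and $k$-connectivity is monotone under edge addition, any set of $\le\epsilon m$ edges augmenting $G$ to a $k$-connected graph also augments $G_{k-1}$, so $G_{k-1}$ has at most $2\epsilon m$ extreme sets.
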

\begin{proof}
Each $t$-bounded extreme set will contribute at most one edge.
There are at most $2\epsilon m$ extreme sets.
On average $1/t$ of the remaining vertices contribute edges.
Finally, the super-nodes themselves contribute at most $ckn/2$ edges.
Therefore we add at most $2\epsilon m + ckn/2 + X$ edges, where
$X$ is a random variable with mean
$\mathbb{E}[X] \le \frac{n\ln(Cn/t)}{t}$.
Markov's inequality implies
$$
\Pr \left[X > \frac{\mathbb{E}[X]}{\delta} \right] \le \delta.
$$
\end{proof}

\begin{proof}[Proof of Theorem~\ref{t:kconn_recon}]
Set $C = \frac{1}{\ln(1/(1-\gamma))}$ and $t = \frac{\ln(Cn)}{\delta\alpha\epsilon}$.
By Lemma~\ref{l:kconn} our resulting graph is $k$-connected with probability
at least $1-\gamma$ and by
Lemma~\ref{l:dist}, $\dist(G_{k-1}, G_k) \le (2+\alpha)\epsilon + ck/2$ with
probability $1-\delta$
and therefore by induction $\dist(G, G_k) \le (2+\alpha)k\epsilon + ck^2/2$.
This can be improved to $(2+\alpha)k\epsilon + ck/2$ by noting that the
same super-nodes and the same edges between them can be used for
all intermediate graphs $G_1,\ldots,G_{k-1}$. The success probability
is at least $(1-\delta)^k(1-\gamma)^k \ge 1- k(\delta+\gamma)$.
The query complexity for correcting
$G_{k-1}$ to $G_k$ is $O(t^3(t+k)\log(t+k)\log(Cn))$ queries to $N_{G_{k-1}}$.
But each call to $N_{G_{k-1}}$ takes $O\left(\left(\frac{1}{c} +1\right)k \right)$
calls to $E_{G_{k-1}}$.
By induction, the query complexity of $E_{G_k}$ is
$O\left( \left( \left(\frac{1}{c} + 1\right) k \right)^{k} \cdot t^{3k}(t+k)^{k} \log^k(t+k)
\log^k(Cn)\right)$.
\end{proof}
}
\section{Local reconstruction of small diameter}\label{s:diam}

\later{
\ifabstract{
\section{Small diameter}\label{app:diam}
}\fi
}

In this section we prove Theorem~\ref{t:diam_recon}. We first go over
some preliminaries on graph diameter. We then give the high level
description of the algorithm and prove some useful characteristics of graphs
that are close to having small diameter before finally presenting the
implementation and analysis of the algorithm.

\subsection{Preliminaries}

\begin{defn}
Let $G$ be a graph with adjacency matrix $A$. For an integer $k$, let $G^k$
be the graph on the same vertex set as $G$, with adjacency matrix $A^k$
(boolean arithmetic).
\end{defn}

It is not hard to show that there is an edge between $u$ and $v$ in $G^k$ if and only if
there is a path of length at most $k$ between $u$ and $v$ in $G$.
This observation immediately gives us the following.

\begin{prop}\label{p:chardiam}
Let $G$ be a graph and let $D > 0$ be an integer. Then
$\diam G \le D$ if and only if $G^D$ is a complete graph.
\end{prop}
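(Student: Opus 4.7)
The plan is to unfold both sides of the biconditional directly using the observation stated immediately before the proposition, namely that $(u,v)$ is an edge in $G^k$ if and only if there is a path of length at most $k$ from $u$ to $v$ in $G$. Once this characterization of $G^k$ is in hand, the proposition is an immediate corollary of the definition of diameter, so the proof splits cleanly into two short implications.

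For the forward direction, I would assume $\diam G \le D$. By definition of diameter this means $\dist_G(u,v) \le D$ for every pair of vertices $u, v$, so there is a path of length at most $D$ connecting them in $G$. Applying the observation, every pair $(u,v)$ is an edge in $G^D$, which is exactly the statement that $G^D$ is a complete graph.

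For the reverse direction, I would assume $G^D$ is a complete graph. Then for any two distinct vertices $u, v$, the pair $(u,v)$ is an edge of $G^D$, and the observation yields a path of length at most $D$ from $u$ to $v$ in $G$. Therefore $\dist_G(u,v) \le D$ for every pair, and taking the maximum gives $\diam G \le D$.

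The only substantive content really lies in the preceding observation about $G^k$, which the paper takes as established (it follows from the standard fact that entries of boolean powers of the adjacency matrix record reachability by walks of the corresponding length, combined with the reduction of walks to paths). Given that, there is no real obstacle in the proposition itself, so I would simply state the two directions as above without further computation.
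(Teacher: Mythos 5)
Your proof is correct and follows exactly the route the paper intends: the paper states the proposition as an immediate consequence of the observation that $(u,v)\in E(G^k)$ iff there is a path of length at most $k$ in $G$, and provides no further proof. You have simply written out the two directions of that immediate implication, which is the same argument.
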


\subsection{High level description}

The basic idea behind the algorithm is as follows. Again, we designate a ``super-node''
$v_0$ and add edges between $v_0$ and a few special vertices.
If the input graph is close to having diameter at most $D$, then we aim for our
reconstructed graph to have diameter at most $2D+2$.
We show that if $G$ is close to having diameter at most $D$, then we have an upper bound
on the size of any independent set in $G^D$ (Lemma~\ref{l:addedge} and
Corollary~\ref{c:close}).

Ideally, then, we want our special vertices (those that get an edge to $v_0$) to
be a dominating set in $G^D$ that is not too large. If we add edges from the dominating
set in $G^D$ to $v_0$, then
our resulting graph $\widetilde{G}$ has diameter at most $2D+2$, since
any vertex can reach some vertex in the dominating set
within $D$ steps, and hence $v_0$ within $D+1$ steps.
If this dominating
set is a maximal independent set, then
our upper bound on the size of independent sets in $G^D$ also upper bounds
the number of edges we add. However, all
known algorithms for locally computing a maximal independent set
have query complexity bounded in terms of the maximum degree of the
graph. A variant of the algorithm found in~\cite{NO08} has been analyzed
by~\cite{YYI09} and~\cite{ORRR12}
to run in expected time bounded in terms of the average degree of the graph,
but this average is taken over not only coin tosses of the algorithm
but also all possible queries. This is undesirable for us since we want a uniform
bound on the query complexity for any potential query vertex, given a ``good''
set of coin tosses. We want an algorithm such that, for most sets of coin tosses,
we get the correct answer \emph{everywhere}, whereas the algorithm of
\cite{NO08} leaves open the possibility of failure for some queries, regardless
of the coin tosses. If some queries give the wrong answer, then the fact
that our reconstructed graph retains the property is compromised.
Instead, we do something less optimal but good enough.
Note that adding edges does not increase diameter. Instead of using a maximal
independent set, we settle for a dominating set in $G^D$ as long as we can
still control its size. 

\subsection{Properties of graphs close to having small diameter}

The following lemma and subsequent corollary state that if a graph
$G$ is close to having diameter at most $D$, then no independent set
in $G^D$ can be very large.

\later{
\ifabstract{
\subsection{Proof of Lemma~\ref{l:addedge}}
Recall the statement of Lemma~\ref{l:addedge}.
}\fi
}
\both{
\begin{lemma}\label{l:addedge}
Let $v_1,\ldots,v_k$ be an independent set in $G^D$ and let
$H$ be the subgraph of $G^D$ induced by this set. Let $(s,t)$ be
an edge not in $E(G)$ and let $G' = (V(G), E(G) \cup \{(s,t)\})$ be the
graph obtained by adding $(s,t)$ to $G$. Let $H'$ be the subgraph
of $(G')^D$ induced by $v_1,\ldots,v_k$. Then, for some $i \in \{1,\ldots,k\}$,
all edges in $H'$ (if any) are incident to $v_i$. In particular,
if $G^D$ has an independent set of size $k$, then $(G')^D$ has an
independent set of size $k-1$.
\end{lemma}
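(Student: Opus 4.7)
The plan is to argue that every new edge in $H'$ (edges which are in $(G')^D$ but not in $G^D$) must, in $G'$, use the newly added edge $(s,t)$ as part of its certifying short path, and then show that at most one vertex $v_i$ can be a common endpoint of all such paths.

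First I would note that if $(v_a,v_b)$ is an edge in $H'$, then $d_{G'}(v_a,v_b) \le D$, while the independence of $v_1,\ldots,v_k$ in $G^D$ gives $d_G(v_a,v_b) \ge D+1$. Hence any length-$\le D$ path in $G'$ between $v_a$ and $v_b$ must traverse $(s,t)$; after relabeling, this yields
\[
d_G(v_a,s) + 1 + d_G(t,v_b) \le D.
\]
To lighten notation I would set $p(v) = d_G(v,s)$ and $q(v) = d_G(v,t)$, so each edge $(v_a,v_b)$ of $H'$ gives either $p(v_a)+q(v_b) \le D-1$ or $q(v_a)+p(v_b) \le D-1$.

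Next, suppose for contradiction there are two edges $(v_a,v_b)$ and $(v_c,v_d)$ in $H'$ with $\{a,b\}\cap\{c,d\}=\emptyset$. After possibly relabeling, I may assume $p(v_a)+q(v_b) \le D-1$; for the second edge there are two cases, $p(v_c)+q(v_d) \le D-1$ or $q(v_c)+p(v_d) \le D-1$. Adding the two inequalities and regrouping gives in the first case
\[
\bigl(p(v_a)+p(v_c)\bigr) + \bigl(q(v_b)+q(v_d)\bigr) \le 2D-2,
\]
and in the second case
\[
\bigl(p(v_a)+p(v_d)\bigr) + \bigl(q(v_b)+q(v_c)\bigr) \le 2D-2.
\]
In either case, one of the two bracketed sums is at most $D-1$, but the triangle inequality in $G$ shows $p(u)+p(u') \ge d_G(u,u')$ and $q(u)+q(u') \ge d_G(u,u')$, and the independence of $v_1,\ldots,v_k$ in $G^D$ forces $d_G(u,u') \ge D+1$ for any two distinct vertices among them. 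This is the contradiction.

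The main obstacle is precisely this case analysis on which direction each path traverses the edge $(s,t)$; the key trick is to sum the two inequalities and regroup so that the triangle inequality applied to a pair of \emph{distinct} $v$'s kicks in. Having established that all edges of $H'$ share a common endpoint $v_i$, the second assertion follows immediately: removing $v_i$ from $\{v_1,\ldots,v_k\}$ leaves $k-1$ vertices with no edges among them in $(G')^D$, hence an independent set of size $k-1$ in $(G')^D$.
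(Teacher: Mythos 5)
Your argument establishes that no two edges of $H'$ can be vertex-disjoint, but this does \emph{not} imply that all edges of $H'$ share a common endpoint, which is what the lemma actually asserts. The family of graphs in which every two edges intersect consists of stars \emph{and} triangles: three edges $(v_a,v_b),(v_b,v_c),(v_a,v_c)$ pairwise intersect yet have no common vertex. Your proof stops after ruling out the disjoint-pair configuration and then declares ``Having established that all edges of $H'$ share a common endpoint,'' but the triangle configuration is not excluded by anything you wrote. This matters for the ``in particular'' conclusion too: if the edges of $H'$ formed a triangle, deleting a single $v_i$ would still leave one edge, so you would only obtain an independent set of size $k-2$, not $k-1$.

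The paper's proof has exactly the same first step as yours (two disjoint edges give $a+b+c+d \le 2D-2$, hence one of $a+c$, $b+d$ is at most $D$, forcing two of the four vertices to coincide), but then has a second step devoted entirely to killing the triangle. It supposes the third edge is $(v_j,v_l)$ with $(v_i,v_j)$ and $(v_i,v_l)$ already present, sets $a=d(v_i,s)$, $b=d(t,v_j)$, $d=d(t,v_l)$, $e=d(v_j,s)$, and combines $a+b+1\le D$, $a+d+1\le D$, $e+d+1\le D$ with the independence bounds $a+e>D$ and $b+d>D$ to derive $D \ge e+d+1 > b+d+2 > D+2$, a contradiction. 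You would need to add an argument of this form (or some other way of excluding the triangle) to close the gap; your summing-and-regrouping trick does not directly apply, since for a triangle the three edges do not pair off into two disjoint ones.
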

}
\later{
\begin{proof}
We will prove the assertion as follows. First, we will show that
any two edges in $H'$ must share a vertex. Then, we show that any third
edge must also be incident to that vertex. This implies that all edges
share a common vertex.

Suppose that the edges $(v_i, v_j)$ and $(v_k, v_l)$ are in $H'$.
Then without loss of generality the paths from $v_i$ to $v_j$
and from $v_k$ to $v_l$ in $G'$ must use $(s,t)$. Let $a = d(v_i, s)$, $b = d(t,v_j)$,
$c = d(v_k, s)$ and $d = d(t, v_l)$, so we have
\begin{eqnarray*}
a + b + 1 &\le D \\
c + d + 1 &\le D.
\end{eqnarray*}
Adding yields $a+b+c+d \le 2D-2$, so at least one of $(a+c)$, $(b+d)$ must
be less than or equal to $D$. Without loss of generality, suppose $a+c \le D$.
This implies that $d(v_i,v_k) \le D$ even in $G$, so it must be
that $v_i = v_k$.

Now suppose yet another edge is in $H'$. By what we just showed,
it must share an incident vertex with $(v_i, v_j)$, and it must do likewise with
$(v_i, v_l)$. So either it is incident to $v_i$, in which case we are done, or
the edge is $(v_j, v_l)$. We will show the latter cannot happen. Define
$a, b, d$ as above. Note that we have
$$
a + d + 1 \le D.
$$
Suppose there is path of length $\le D$ from $v_j$ to $v_l$ in $G'$ which traverses
$(s,t)$ by entering via $s$ and exiting via $t$. Let $e = d(v_j, s)$. Then we have
$$
e + d + 1 \le D.
$$
However, since any path from $v_i$ to $v_j$ that does not traverse $(s,t)$ must
have length greater than $D$, we have $a + e > D$ from which we can deduce
$$
e > b+1.
$$
Similarly, since any path from $v_j$ to $v_l$ that does not traverse $(s,t)$ must
have length greater than $D$, we have
$$
b + d > D.
$$
But then $D \ge e + d + 1 > b + d + 2 > D + 2$, a contradiction.
\end{proof}
}

\begin{cor}\label{c:close}
Suppose $G$ is $\epsilon$-close to having diameter $\le D$. Then any
independent set in $G^D$ has size at most $\epsilon m + 1$.
\end{cor}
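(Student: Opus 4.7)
The plan is to reduce to Lemma~\ref{l:addedge} by an edge-addition argument. First, I would observe that since adding edges can only decrease the diameter, we may assume without loss of generality that the nearest graph $G^*$ to $G$ having diameter at most $D$ is obtained from $G$ by edge \emph{additions} only: if $G^*$ differs from $G$ by $\epsilon m$ edge changes (some additions, some deletions), then the supergraph $G \cup E(G^*)$ still has diameter at most $D$ (adding edges does not increase diameter) and is obtained from $G$ by at most $\epsilon m$ insertions.

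So fix $G_0 = G$ and let $G_1, G_2, \ldots, G_\ell$ be the sequence obtained by adding these at most $\ell \le \epsilon m$ edges one at a time, so that $G_\ell$ has diameter at most $D$. Suppose toward contradiction that $G^D$ has an independent set $I$ of size $k = \epsilon m + 2$. I would then iterate Lemma~\ref{l:addedge}: each single-edge insertion from $G_i$ to $G_{i+1}$ decreases the size of the largest independent set in the induced $D$-th power subgraph by at most $1$, so after $\ell \le \epsilon m$ insertions, $(G_\ell)^D$ still contains an independent set of size at least $k - \epsilon m = 2$.

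But by Proposition~\ref{p:chardiam}, since $\diam G_\ell \le D$, the graph $(G_\ell)^D$ is complete on $n$ vertices, and a complete graph admits no independent set of size $\ge 2$. This contradiction shows $k \le \epsilon m + 1$, completing the proof.

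The only mildly delicate step is the reduction to additions-only; everything else is a direct iteration of Lemma~\ref{l:addedge} combined with Proposition~\ref{p:chardiam}. No new calculations or case analysis should be required beyond what these two results already supply.
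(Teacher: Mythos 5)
Your proof is correct and takes essentially the same approach as the paper's: reduce to edge additions only, iterate Lemma~\ref{l:addedge} over the (at most $\epsilon m$) added edges, and derive a contradiction with Proposition~\ref{p:chardiam}. The paper's proof is terser and leaves the additions-only reduction and the iteration implicit, but the substance is identical.
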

\iffull{
\begin{proof}
Suppose $G^D$ has an independent set of size $\epsilon m + 2$. There
exist a set of $\epsilon m$ edges such that, when added to $G$,
we obtain $G'$ with $\diam(G') \le D$, and hence $(G')^D$ is a complete graph.
But, by Lemma~\ref{l:addedge}, $(G')^D$ has an independent set of size
$2$, a contradiction.
\end{proof}
}\fi

Finally, we will use a result from~\cite{AGR, PR02} which we will restate in our own
terms below (the theorem number we reference is from~\cite{AGR}):

\begin{theorem}[{\cite[Theorem~3.1]{AGR}}]
Any connected graph $G$ is $(\frac{2n}{Dm})$-close to having diameter~$\le D$.
\end{theorem}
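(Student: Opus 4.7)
The plan is to exhibit, for any connected $G$, a set of at most $2n/D$ edges whose addition produces a graph of diameter $\le D$. The construction goes through a spanning tree $T$ of $G$ (which exists since $G$ is connected): I will find a small set $S$ of vertices such that every vertex is close to $S$ in $T$, and then add shortcut edges turning $S$ into a star. Concretely, set $r = \lfloor D/2 \rfloor - 1$; I claim there is a set $S \subseteq V$ with $|S| \le n/(r+1) \le 2n/D$ such that every $v \in V$ is within $T$-distance $r$ of some element of $S$. Given such $S$, pick $s_0 \in S$ arbitrarily and form $G'$ by adding the edges $\{(s_0, s) : s \in S \setminus \{s_0\}\}$; this inserts at most $|S|-1 \le 2n/D$ edges, i.e.\ a fraction $\le \frac{2n}{Dm}$.

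For the diameter bound in $G'$, take any $u, w \in V$ and let $s_u, s_w \in S$ be within $T$-distance $r$ of $u$ and $w$ respectively. Concatenating the tree paths with the two new hub edges produces a walk $u \to s_u \to s_0 \to s_w \to w$ of length at most $r + 1 + 1 + r = 2\lfloor D/2 \rfloor \le D$ in $G'$. So once $S$ exists with the claimed covering radius and size, the theorem follows immediately.

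The main step — and the only nontrivial one — is producing such an $S$. I would argue by a greedy/induction procedure on $T$: root $T$ arbitrarily, let $u$ be a deepest vertex, let $s$ be the ancestor of $u$ at distance exactly $r$ (or the root if $u$'s depth is $\le r$), add $s$ to $S$, delete the entire subtree of $s$ from $T$, and recurse on the remainder. Because $u$ is deepest, every descendant of $s$ lies within distance $r$ of $s$, so $s$ correctly covers everything removed; moreover the deleted subtree contains the $r+1$ vertices on the path from $s$ down to $u$, so each element added to $S$ accounts for at least $r+1$ vertices of $T$. Induction then gives $|S| \le n/(r+1)$, completing the argument. The delicate point to check carefully is that the subtree of $s$ really has depth $\le r$ below $s$, which uses the choice of $u$ as a globally deepest vertex in the current tree, not merely in the subtree of $s$.
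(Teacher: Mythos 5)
Your construction is sound, and it is essentially the argument from Alon–Gyárfás–Ruszinkó: take a spanning tree, greedily peel off shallow subtrees by repeatedly selecting the globally deepest vertex $u$ and cutting at its $r$-th ancestor $s$, then star the cut vertices to a single hub. The delicate point you flag yourself — that the subtree of $s$ has height $\le r$ because $u$ is \emph{globally} deepest in the current tree, not merely deepest below $s$ — is exactly the right thing to check, and your handling of it is correct; likewise the deleted distances only shrink when passing from the shrinking tree back to $T$, so coverage in the shrunken tree implies coverage in $T$. The one real slip is the chain of inequalities $|S| \le n/(r+1) \le 2n/D$: with $r+1 = \lfloor D/2 \rfloor$ the second inequality is \emph{false} for odd $D$, where $n/\lfloor D/2 \rfloor = 2n/(D-1) > 2n/D$. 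What your greedy actually delivers (counting carefully that the last, possibly shallow, step still removes $\ge 1$ vertex) is at most $\frac{n-1}{\lfloor D/2 \rfloor}$ added edges, i.e.\ a distance of at most $\frac{n-1}{m\lfloor D/2 \rfloor}$. That is precisely the bound in AGR's Theorem~3.1; the paper's restatement as $\frac{2n}{Dm}$ is a minor rounding that is exact for even $D$ and slightly optimistic for odd $D$. So the gap is in reconciling your bound with the paper's paraphrase, not in your construction — and since the reconstructor in Section~\ref{s:diam} only ever targets an even diameter ($2K+2$), the discrepancy is harmless there. If you want the written argument to be airtight against the stated inequality, either state the conclusion as $\frac{n}{m\lfloor D/2 \rfloor}$-close, or observe that for even $D$ you match $2n/D$ exactly and for odd $D$ you can run the construction with target $D-1$ (even), which only helps the diameter and gives $\frac{2n}{(D-1)}$ — and note explicitly that this is the actual AGR bound.
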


This result implies that any connected graph is $\epsilon$-close to having
diameter at most $\frac{2n}{\epsilon m}$, and will come in handy when we
want to bound our query complexity. In particular, the query complexity of our algorithm
will depend on $D$, but if $D \ge \frac{2n}{\epsilon m}$, which is $O(1)$
for constant $\epsilon$, then we can instead simply aim for diameter
$\frac{2n}{\epsilon m}$ and bound our query complexity in terms of this constant
instead.

\subsection{Algorithm}
\ifabstract{
We defer the algorithm and its
analysis to Appendix~\ref{app:diam}.
}\fi
\later{
\ifabstract{
\subsection{Overview of Algorithm}
}\fi
We start by fixing a super-node $v_0 \in G$.
Given our discussion in the high level description, it remains to
implement the selection of a small dominating set.
To this end, we create a dominating set $S$ by first adding
the set $H$ of high-degree vertices into $S$ and then using the local
maximal independent set algorithm found in~\cite{RTVX11}, which is
based on Luby's algorithm (\cite{Luby}), on $G^D$ with $H$ and its vertices'
neighbors (in $G^D$) removed to create an independent set $M$. Then let
$S = H \cup M$.

By high-degree vertex we mean a vertex with degree greater than
$\overline{d}/\epsilon$, of which there are at most $\epsilon n$, where
$\overline{d} = \frac{2m}{n}$ is a bound on the average degree.
We will also consider the super-node $v_0$ a high-degree vertex for our purposes.
To implement choosing a maximal independent set, we define
a subroutine $\textsc{MIS}_D(v)$ as follows. On input $v$, simulate the
local maximal independent set described in~\cite{RTVX11}, except explore
all neighbors within $D$ steps from $v$ rather than just immediate neighbors,
and automatically reject if the algorithm encounters any vertex with degree
exceeding $\overline{d}/\epsilon$.
Note that the running time of
the local MIS algorithm given in~\cite{RTVX11} is bounded in terms of
the maximum degree of the graph. This is not problematic since our variant
of the algorithm ignores any vertex with degree greater than $\overline{d}/\epsilon$,
hence the effective maximum degree of our graph $G$ for
the purposes of the algorithm is $\overline{d}/\epsilon$, so the effective
maximum degree of $G^D$ is $(\overline{d}/\epsilon)^D$.

One final challenge is that if $D$ is large, say $\Theta(\log n)$, then
our query complexity bound in terms of our effective degree is no longer sublinear.
We work around this by using a result of~\cite{AGR}, which states that
every connected graph is $\epsilon$-close to having diameter at most
$\frac{2n}{\epsilon m} = O(1/\epsilon)$. Therefore, we can aim for
achieving diameter $K = \min\{D, \frac{2n}{\epsilon m}\}$ so that our effective degree
is $(\overline{d}/\epsilon)^K = (\overline{d}/\epsilon)^{O(1/\epsilon)}$.
Of course, this only works if our graph is connected to begin with. Therefore,
we use the neighbor oracle for the connected correction $G'$ of $G$,
given in Section~\ref{s:neighbor}. The idea is then to first make $G$ into a connected graph
$G'$, and then reconstruct a small diameter graph $\widetilde{G}$ out of $G'$.\\
}

\later{
\ifabstract{
\subsection{Pseudocode for small diameter}
}\fi
\begin{framed}
\begin{algorithmic}
\Procedure{SmallDiam}{$v_1,v_2,D$}
\State $K \gets \min\{\frac{2n}{\epsilon m}, D\}$
\If{$E_{G'}(v_1,v_2) = 1$}		\Comment{if edge is already in graph, it stays in the graph}
	\State \Return $1$
\Else
	\If{$v_0 \notin \{v_1,v_2\}$}  \Comment{do not add edge if neither endpoint is $v_0$}
		\State \Return $0$
	\Else
		\State Let $v \in \{v_1,v_2\} \minus \{v_0\}$ \Comment{check if $v$ is special}
		\If{$\deg_{G'}(v) > \overline{d}/\epsilon$} 
			\State \Return $1$
		\Else
			\State \Return $\textsc{MIS}_K(v)$  \Comment{neighbor queries
			to $G'$ instead of $G$}
		\EndIf
	\EndIf
\EndIf
\EndProcedure
\end{algorithmic}
\end{framed}
}

\later{
\begin{proof}[Proof of Theorem~\ref{t:diam_recon}]
Correcting $G$ to $G'$ with $\textsc{Mod-Connect}$
adds $(1+\alpha)\epsilon m + cm$ edges.
The query complexity is clearly dominated by that of $\textsc{MIS}_K(v)$.
The probability of success follows from that of the local MIS algorithm.
Now we show correctness. Let $H$ be the set of vertices in $G$ with degree
greater than $\overline{d}/\epsilon$, and let $M$ be the set of vertices for which
$\textsc{MIS}(v) = 1$. Observe that $M$ is an independent set in $G^D$,
so $|M| \le \epsilon m$. Note that since we forced $v_0$ to be in
this union, the number of edges we added is actually
$
|H \cup M \minus \{v_0\}| = |H| + |M| - 1 \le \epsilon n + \epsilon m \le 2\epsilon m.
$
Furthermore, let $v \in V(G)$. There exists some $u \in H \cup M$ such that
$d(v,u) \le K$, otherwise $v$ would have been added to $M$. Therefore,
$d(v,v_0) \le K+1$. For any other $v' \in V(G)$, we thus have
$
d(v,v') \le d(v,v_0) + d(v_0,v') \le 2K+2 \le 2D+2.
$
\end{proof}
}



\appendix
\magicappendix

\end{document}